\theoremstyle{plain}
\newtheorem{theorem}{Theorem}[section]
\newtheorem{lemma}[theorem]{Lemma}
\newtheorem{corollary}[theorem]{Corollary}
\theoremstyle{remark}
\newtheorem{definition}[theorem]{Definition}
\newtheorem*{remark}{Remark}
\title{Making all pairwise comparisons in multi-arm clinical trials without control treatment}
\author[$\star$]{T. BURNETT}
\author[$\dagger\ddag$]{T. JAKI}
\affil[$\star$]{Department of Mathematical Sciences, University of Bath, Bath, BA2 7AY,  U.K.\\ 
tb292@bath.ac.uk}
\affil[$\dagger$]{MRC Biostatistics Unit, University of Cambridge, Cambridge, CB2 0SR, UK\\ thomas.jaki@mrc-bsu.cam.ac.uk}
\affil[$\ddag$]{Department of Machine Learning and Data Science, University of Regensburg, Regensburg, Germany}
\begin{document}

\maketitle

\begin{abstract}
The standard paradigm for confirmatory clinical trials is to compare experimental treatments with a control, for example the standard of care or a placebo. However, it is not always the case that a suitable control exists. Efficient statistical methodology is well studied in the setting of randomised controlled trials. This is not the case if one wishes to compare several experimental with no control arm. We propose hypothesis testing methods suitable for use in such a setting.  These methods are efficient, ensuring the error rate is controlled at exactly the desired rate with no conservatism. This in turn yields an improvement in power when compared with standard methods one might otherwise consider using, such as a Bonferroni adjustment. The proposed testing procedure is also highly flexible. We show how it may be extended for use in multi-stage adaptive trials, covering the majority of scenarios in which one might consider the use of such procedures in the clinical trials setting. With such a highly flexible nature, these methods may also be applied more broadly outside of a clinical trials setting.
\end{abstract}

\section{Introduction}

Confirmatory clinical trials, by their nature, require careful construction of the hypothesis tests to control the type $\mathrm{I}$ error rate \citep{ICHE9,bretz2020commentary}. A trial might consider multiple experimental treatments. For example, several doses of the same treatment, competing treatments or combinations of treatments. In such circumstances, we may incorporate multiple comparisons with a common control, while strongly controlling the Family-Wise Error Rate (FWER) \citep{magirr2012generalized,urach2016multi,stallard2003sequential,thall1989two,sampson2005drop}. As noted by \cite{bretz2020commentary} strong control of the FWER is required in the highly regulated environment of clinical trials \citep{ICHE9,FDA,EMA}. Such multi-arm methods are efficient for multiple comparisons within the same trial and can be extended to adaptive multi-stage designs \citep{magirr2012generalized,urach2016multi,stallard2003sequential,thall1989two,sampson2005drop}. Multi-Arm Multi-Stage designs open the door to the many benefits of adaptive designs \citep{pallmann2018, burnett2020adding}.

In other cases, there is no obvious control treatment. \cite{magaret2016design} (revisited in \cite{whitehead2020estimation}) consider a trial in Sepsis with multiple treatment options but no control arm. Suppose, for example, there is no standard of care and all treatments are used in clinical practice. There would be little motivation for patients to enrol to a trial where they may not receive treatment that they might access through not joining the trial. In such cases, the inclusion of a control treatment in the traditional sense may not be practical or ethical. Further, the control treatment may not be required to answer the scientific question of interest. For example, if we wish to directly compare all of the experimental treatments. In such situations, we suggest making all pairwise comparisons between each of the treatments in the trial. 

To strongly control the FWER while making all pairwise comparisons one might use a Bonferroni adjustment or a gatekeeping procedure \citep{garcia2008extension,jennison1999group,seaman1991new,cangur2014examining}. Similarly the Tukey's range test \citep{tukey1949} could be used in this context. We take inspiration from the approaches from tests with a common control, defining a closed testing procedure \citep{marcus1976closed} using a two-sided Dunnett type test \citep{dunnett1955multiple} for each intersection hypothesis. This procedure benefits from using the correlation structure that is introduced through the multiple pairwise comparisons, strongly controls the FWER as desired and increases the power of the test compared to the above alternatives. In addition, our hypothesis testing method may be extended for use in adaptive multi-stage designs. We discuss how to construct multi-stage tests using both group sequential \citep{jennison1999group,pocock1977group,o1979multiple,whitehead1997design,gordon1983discrete} and fully flexible \citep{bretz2006confirmatory,schmidli2006confirmatory} methods, covering all scenarios in which the test may be used in the setting of clinical trials.

Our focus is the setting of confirmatory clinical trials and hence the manuscript is written with this context in mind. The methods we introduce are not limited to this application however, the hypothesis testing structure is applicable in any experimental design where all pairwise comparisons are desired while strongly controlling the FWER. For example, in the comparison of advertising strategies or the agricultural sector comparing several fertilisers. While similar questions have been asked in the settings of psychology \citep{seaman1991new} and machine learning \citep{garcia2008extension}.

\section{All pairwise treatment comparisons in multi-arm trials}

\subsection{Multi-arm trial}
\label{sec:MA}

Suppose there are $K > 2$ candidate treatments $T_1,...,T_K$ and we wish to compare their comparative effectiveness. For each treatment $T_1,...,T_K$ we consider the corresponding treatment effect $\mu_1,...,\mu_K$. With no appropriate control we consider all pairwise comparisons, conducting formal analysis through the definition of corresponding hypotheses. For two treatments $i\neq j$ ($i,j=,1...,K$) we test the null hypothesis $H_{0,i,j}:\mu_i = \mu_j$ vs the alternative $H_{1,i,j}:\mu_i \neq \mu_j$. Swapping $i$ and $j$ will compare the same pair $\mu_i$ and $\mu_j$, so we only consider the tests that are unique. For $i=1,...,K-1$ we test the null $H_{0,i,j}$ for each $j$ such that $i < j \leq K$. Thus for $K$ treatments, we make $K(K-1)/2$ unique pairwise comparisons. We will refer to the setting were all means are identical, that is $\mu_i=\mu_j\ \forall\ i= 1, \dots, K$ and $j=1,\dots K$ as the global null hypothesis.

We plan a trial recruiting $n_i$ patients to treatment $T_i$ for $i=1,...,K$. Assuming that patient responses $x_{m,i}$ for $i=1,...,K$ and $m=1,...,n_i$ are independent and identically distributed following a normal distribution with known variance (this is not required but is suitable for demonstrating the methods to follow) we have that 
\begin{equation*}
x_{m,i} \sim N(\mu_i,\sigma^2_i)\text{ for } i=1,...,K\text{ and }m=1,...,n_i.
\end{equation*}
From which we estimate $\mu_i$ for $i=1,...,K$ $\mu_i$ by 
$\hat{\mu}_i = n_i^{-1}\Sigma_{m=1}^{n_i}x_{m,i}.$

We introduce an index for each of the unique pairwise comparisons. For the comparison of $T_i$ and $T_j$ where $i=1,...,K-1$ and $j$ such that $i < j \leq K$ we define the index $k$ as
\begin{equation*}
k = \begin{cases}
j-1 & \text{ if } i=1\\
j-1 + \Sigma_{c=1}^{i-1}K-c & \text{ if } i>1.
\end{cases}
\end{equation*}
We use this convention wherever $i,j,k$ appear throughout the manuscript. As an example, we write the null hypothesis $H_{0,i,j}$ as $H_{0,k}$.

For each $k=1,...,K(K-1)/2$ let 
\begin{equation*}
\sigma_{p,k}^2 = \frac{\sigma_i^2}{n_i} + \frac{\sigma_j^2}{n_j}\text{ and }\theta_k = \mu_i - \mu_j.
\end{equation*} 
We construct z-statistics for each pairwise comparison 
\begin{equation*}
Z_k = \frac{\hat{\theta}_k}{\sigma_{p,k}} \sim N(\theta_k\sigma^{-1}_{p,k},1)\text{ for }k=1,...,K(K-1)/2
\end{equation*}
so that $Z_k \sim N(0,1)$ under $H_{0,k}$. Supposing we wish to test $H_{0,k}$ at a type I error rate of $\alpha$, we reject $H_{0,k}$ when $|Z_k| > \Phi^{-1}(1-\alpha)$. Where $\Phi(.)$ is the standard normal cumulative distribution function.

\section{Testing procedure}

\subsection{Multiplicity adjustment}

Following Section~\ref{sec:MA} we can test all unique pairwise comparisons. However, in doing so we are making multiple comparisons with no adjustment. Given the desire to control the error rate it would be neglectful not to consider the impact of this multiplicity. Indeed the need for appropriate adjustment for multiplicity is well established in a clinical trials setting \citep{ICHE9,FDA,EMA}.
Strong control of the Family-Wise Error Rate (FWER) \citep{hochberg1987multiple,dmitrienko2009multiple} is a natural extension of control the type I error. 
For any $\mathcal{K} \subseteq \{1,...,K(K-1)/2\}$ suppose $k \in \mathcal{K}$ implies that $H_{0,k}$ is true. Defining $R_k$ to be the event that we reject $H_{0,k}$ for $k=1,...,K(K-1)/2$, strong control of the FWER at level $\alpha$ requires
\begin{equation*}
P(\bigcup_{k\in \mathcal{K}}R_k) \leq \alpha,
\end{equation*} 
for all $\mathcal{K} \subseteq \{1,...,K(K-1)/2\}$. 

To strongly control the FWER we require a closed testing procedure \citep{marcus1976closed}; whether this is implicitly or explicitly defined, as any procedure that strongly controls the FWER is equivalent to a closed testing procedure \citep{burnett2021adaptive}. To construct the closed testing procedure in addition to the individual pairwise tests we must also construct tests of all possible intersections of the null hypotheses. For $\mathcal{K} \subseteq \{1,...,K(K-1)/2\}$, we define the corresponding intersection hypothesis $H_{0,\mathcal{K}} = \bigcap_{k\in \mathcal{K}} H_{0,k}$. To reject $H_{0,k}$ (for $k=1,...,K(K-1)/2$) globally at level $\alpha$, we must reject local tests at level $\alpha$ for all null hypotheses in which it is involved. That is we reject $H_{0,k}$ globally if we reject all $H_{0,\mathcal{K}}$ for which $k\in\mathcal{K}$.

\subsection{Testing the intersection hypotheses}

\label{sec:tsdun}

We demonstrate how to construct the test for the global intersection $H_{0,F} = \bigcap_{k=1}^{K(K-1)/2} H_{0,k}$. The construction for all other intersection hypotheses follows identically using the corresponding joint distribution.

Under $H_{0,F}$ marginally $Z_k \sim N(0,1)$ for all $k=1,...,K(K-1)/2$. It follows that the joint distribution of $(Z_1,...,Z_{K(K-1)/2})$ is multivariate normal. This has a known correlation structure, introduced where one treatment appears in both sides of a pairwise comparison. For $k_1,k_2 \in \{1,...,K(K-1)/2\}$, if $k_1 = k_2$ it follows that $corr(Z_{k_1},Z_{k_2}) = 1$, while if $k_1 \neq k_2$ there are two cases to consider. If all four treatments are unique, that is for $k_1 = i_1,j_1$ and $k=i_2,j_2$ where $i_1,j_1,i_2,j_2 \in \{1,...,K(K-1)/2\}$ with $i_1 \neq i_2,j_2$ and $j_1 \neq i_2,j_2$, we have $corr(Z_{k_1},Z_{k_2}) = 0$. Alternatively, there is one shared treatment, for example, $k_1 = i,j_1$ and $k=i,j_2$, in which case,
\begin{equation*}
corr(Z_{k_1},Z_{k_2})  = \frac{\sigma^2_i/n_i}{\sigma_{p,k_1}\sigma_{p,k_2}}
\end{equation*}
Given the known joint distribution we construct a two-sided Dunnett type \citep{dunnett1955multiple} test to leverage this knowledge.

\begin{definition}
\label{def:dunint}
To test $H_{0,F}$ we define the test statistic
\begin{equation*}
Z_F = max(|Z_1|,...,|Z_{K(K-1)/2}|),
\end{equation*}
rejecting $H_{0,F}$ when $Z_F > C_{F,\alpha}$. Let $\theta_F = (\theta_1,...,\theta_{K(K-1)/2})$, under $H_{0,F}$  $\theta_F = \boldsymbol{0}$, we choose $C_{F,\alpha} > 0$ such that 
\begin{equation*}
P_{\theta_F = \boldsymbol{0}}(Z_F > C_{F,\alpha}) = \alpha.
\end{equation*}
\end{definition}

To compute $C_{F,\alpha}$ note that 
\begin{equation*}
P_{\theta_F = \boldsymbol{0}}(Z_F > C_{F,\alpha}) = 1-P_{\theta_F = \boldsymbol{0}}(\bigcap_{k=1}^{K(K-1)/2} -C_{F,\alpha} < Z_k < C_{F,\alpha}),
\end{equation*}
where $P_{\theta_F = \boldsymbol{0}}(\bigcap_{k=1}^{K(K-1)/2} -C_{F,\alpha} < Z_k < C_{F,\alpha})$ may be computed through standard application of the multivariate normal distribution. For example, by the multivariate normal package in R \citep{mvtnorm,compnorm}.

\begin{remark}
For each $\mathcal{K} \subseteq \{1,...,K(K-1)/2\}$ we can construct the test for the null hypotheses of the form $H_{0,\mathcal{K}} = \bigcap_{k\in\mathcal{K}}H_{0,k}$ at level $\alpha$ following the all pairwise Dunnett method given in Definition~\ref{def:dunint}. Applying each of these tests at level $\alpha$ as part of an overall closed testing procedure will ensure strong control of the FWER.
\end{remark}

\begin{lemma}
\label{lem:consonance}
If all intersection tests are constructed using the all pairwise Dunnett test the overall testing procedure is consonant \citep{romano2011consonance}. That is for $\mathcal{K}_1,\mathcal{K}_2 \subseteq \{1,...,K(K-1)/2\}$ with corresponding intersection hypotheses $H_{0\mathcal{K}_1} = \bigcap_{k\in\mathcal{K}_1}H_{0k}$ and  $H_{0\mathcal{K}_2} = \bigcap_{k\in\mathcal{K}_2}H_{0k}$. If $\mathcal{K}_1 \subset \mathcal{K}_2$ then $C_{\mathcal{K}_1,\alpha} < C_{\mathcal{K}_2,\alpha}$.
\end{lemma}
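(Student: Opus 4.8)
The plan is to recast the defining property of each Dunnett critical value as a quantile statement and then compare quantiles across nested index sets. For $\mathcal{K}\subseteq\{1,\dots,K(K-1)/2\}$ write the null distribution function of the intersection statistic as
\[
F_{\mathcal{K}}(c)=P_{\theta=\boldsymbol{0}}\big(\max_{k\in\mathcal{K}}|Z_k|\le c\big)=P_{\theta=\boldsymbol{0}}\big(\textstyle\bigcap_{k\in\mathcal{K}}\{|Z_k|\le c\}\big).
\]
Because $\max_{k\in\mathcal{K}}|Z_k|$ is a continuous random variable (each $|Z_k|$ is absolutely continuous, so the maximum has no atoms), $F_{\mathcal{K}}$ is continuous, and by Definition~\ref{def:dunint} the critical value $C_{\mathcal{K},\alpha}>0$ is precisely the point at which $F_{\mathcal{K}}(C_{\mathcal{K},\alpha})=1-\alpha$. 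Proving the lemma amounts to showing that this $(1-\alpha)$ level is reached strictly later for the larger set.

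First I would establish the weak inequality. If $\mathcal{K}_1\subset\mathcal{K}_2$ then for every $c$ the event $\bigcap_{k\in\mathcal{K}_2}\{|Z_k|\le c\}$ is contained in $\bigcap_{k\in\mathcal{K}_1}\{|Z_k|\le c\}$, so $F_{\mathcal{K}_2}(c)\le F_{\mathcal{K}_1}(c)$ for all $c$. Evaluating at $c^{*}:=C_{\mathcal{K}_1,\alpha}$ gives $F_{\mathcal{K}_2}(c^{*})\le F_{\mathcal{K}_1}(c^{*})=1-\alpha$; since $F_{\mathcal{K}_2}$ is non-decreasing and equals $1-\alpha$ at $C_{\mathcal{K}_2,\alpha}$, this already yields $C_{\mathcal{K}_1,\alpha}\le C_{\mathcal{K}_2,\alpha}$. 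The content of the lemma is thus to upgrade this to a strict inequality, which by the same monotonicity reduces to showing $F_{\mathcal{K}_2}(c^{*})<1-\alpha$, i.e. that the gap
\[
F_{\mathcal{K}_1}(c^{*})-F_{\mathcal{K}_2}(c^{*})=P_{\theta=\boldsymbol{0}}\Big(\textstyle\bigcap_{k\in\mathcal{K}_1}\{|Z_k|\le c^{*}\}\ \cap\ \textstyle\bigcup_{k\in\mathcal{K}_2\setminus\mathcal{K}_1}\{|Z_k|>c^{*}\}\Big)
\]
is strictly positive. Fixing any single index $k_0\in\mathcal{K}_2\setminus\mathcal{K}_1$, it suffices to prove $P_{\theta=\boldsymbol{0}}(\bigcap_{k\in\mathcal{K}_1}\{|Z_k|\le c^{*}\}\cap\{|Z_{k_0}|>c^{*}\})>0$.

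The main obstacle is that this positivity is not immediate, because the joint law of $(Z_k)$ is in general a degenerate multivariate normal: the pairwise structure forces linear relations among the $Z_k$ (already for $K=3$ one has $\sigma_{p,1}Z_1-\sigma_{p,2}Z_2+\sigma_{p,3}Z_3=0$), so an open region of $\mathbb{R}^{|\mathcal{K}_1|+1}$ need not carry positive mass. I would remove the degeneracy by working with the underlying estimators: $(\hat{\mu}_1,\dots,\hat{\mu}_K)$ is a non-degenerate Gaussian vector with full support $\mathbb{R}^K$, and each constraint $|Z_k|\le c^{*}$ with $k=(i,j)$ is the linear band $|\hat{\mu}_i-\hat{\mu}_j|\le c^{*}\sigma_{p,k}$. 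It then suffices to exhibit a non-empty open subset of $\mathbb{R}^K$ on which $|\hat{\mu}_i-\hat{\mu}_j|<c^{*}\sigma_{p,k}$ for all $k\in\mathcal{K}_1$ while $|\hat{\mu}_{i_0}-\hat{\mu}_{j_0}|>c^{*}\sigma_{p,k_0}$; full support then guarantees positive probability.

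Finally I would verify that such a point exists via a difference-constraint (shortest-path) argument on the comparison graph whose vertices are the treatments and whose edges are $\mathcal{K}_1$. The supremum of the linear functional $\hat{\mu}_{i_0}-\hat{\mu}_{j_0}$ over the feasible polytope equals $c^{*}$ times the shortest-path distance between $i_0$ and $j_0$ under edge weights $\sigma_{p,k}$ (infinite if they lie in different components, in which case a large separation between the two components already works). Since $k_0\notin\mathcal{K}_1$, any $i_0$--$j_0$ path in $\mathcal{K}_1$ uses at least two edges, and writing $a_v=\sigma_v/\sqrt{n_v}$ the weight of any such path is at least $a_{i_0}+a_{j_0}>\sqrt{a_{i_0}^2+a_{j_0}^2}=\sigma_{p,k_0}$ by strict subadditivity. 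Hence the supremum strictly exceeds $c^{*}\sigma_{p,k_0}$, so a feasible $\hat{\mu}$ with $\hat{\mu}_{i_0}-\hat{\mu}_{j_0}>c^{*}\sigma_{p,k_0}$ and all $\mathcal{K}_1$ bands strict does exist, the gap is positive, and therefore $C_{\mathcal{K}_1,\alpha}<C_{\mathcal{K}_2,\alpha}$. I expect the shortest-path/subadditivity step to be the crux; everything else is monotonicity and the non-degeneracy reduction.
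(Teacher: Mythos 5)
Your proof is correct, and on the one step that actually carries the mathematical content it goes a genuinely different (and more complete) route than the paper. The paper's argument shares your skeleton up to the weak inequality: it works with $Z_{\mathcal{K}_1}=\max_{k\in\mathcal{K}_1}|Z_k|$ and $Z_{\mathcal{K}_2}=\max_{k\in\mathcal{K}_2}|Z_k|$, decomposes $P_{\theta=\boldsymbol{0}}(Z_{\mathcal{K}_2}>C_{\mathcal{K}_1,\alpha})$ into the contribution from $\{Z_{\mathcal{K}_1}>C_{\mathcal{K}_1,\alpha}\}$ (which has probability exactly $\alpha$) plus a remainder, and then simply \emph{asserts} that $P_{\theta=\boldsymbol{0}}(Z_{\mathcal{K}_2}>C_{\mathcal{K}_1,\alpha}\mid Z_{\mathcal{K}_1}<C_{\mathcal{K}_1,\alpha})>0$, from which strictness follows. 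You correctly identify that this positivity is not automatic: the vector $(Z_1,\dots,Z_{K(K-1)/2})$ is a degenerate Gaussian (e.g.\ $\sigma_{p,1}Z_1-\sigma_{p,2}Z_2+\sigma_{p,3}Z_3=0$ for $K=3$), so one cannot invoke full support of the joint law, and the event in question could in principle be forced empty by the linear relations. Your resolution --- lifting to the non-degenerate vector $(\hat\mu_1,\dots,\hat\mu_K)$, phrasing the constraints as bands $|\hat\mu_i-\hat\mu_j|\le c^{*}\sigma_{p,k}$, and showing via the difference-constraint/shortest-path construction that any $i_0$--$j_0$ path within $\mathcal{K}_1$ has weight at least $a_{i_0}+a_{j_0}>\sqrt{a_{i_0}^2+a_{j_0}^2}=\sigma_{p,k_0}$ --- is sound (the feasible point $x_v=-c^{*}d(i_0,v)$ witnesses the lower bound on the supremum, and scaling by $\lambda<1$ makes all $\mathcal{K}_1$ bands strict while preserving the violation at $k_0$). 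What the paper's route buys is brevity and direct reuse of the defining property $P(Z_{\mathcal{K}_1}>C_{\mathcal{K}_1,\alpha})=\alpha$; what yours buys is an actual proof of the strict-positivity step that the paper leaves as a bare claim, together with an explicit structural reason (subadditivity on the comparison graph) why consonance holds for the all-pairwise correlation structure specifically.
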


\begin{remark}
\label{rem:2}
The consequence of consonance is that if the global intersection hypothesis $H_{0,F}$ is rejected at least one null hypothesis is rejected globally. This is desirable as it ensures the overall procedure is constructed exactly at the desired error rate at every level. Furthermore, it is useful for designing the trial, as we may use the probability of rejecting $H_{0,F}$ as the measure of trial performance for setting the sample size.
\end{remark}

\subsection{Sample size calculation}
\label{sec:masize}

To plan trials we consider the probability of rejecting one or more null hypotheses (\citep[also known as the disjunctive power][]{vickerstaff2019methods}). As noted in Remark~\ref{rem:2}, this occurs whenever we reject $H_{0,F}$ as a consequence of Lemma~\ref{lem:consonance}. For a given configuration $\mu =\xi$ (where $\mu = (\mu_1,...,\mu_k)$) we choose the sample size to achieve $P_{\mu=\xi}(\text{Reject }H_{0,F}) = 1 - \beta$. Under any configuration $\mu=\xi$ the distribution of $Z_1,...,Z_{K(K-1)/2}$ remains multivariate normal with the correlation structure as described in Section~\ref{sec:tsdun}; however, 
\begin{equation*}
E(Z_k) = \frac{\mu_i-\mu_j}{\sigma_{p,k}}\text{ for any }k=1,...,K(K-1)/2.
\end{equation*}
We find $P_{\mu=\xi}(\text{Reject }H_{0,F})$ as described in Section~\ref{sec:tsdun},
\begin{equation*}
P_{\mu=\xi}(\text{Reject }H_{0,F}) = 1 - P_{\mu=\xi}(\bigcap_{k=1}^{K(K-1)/2} -C_{F,\alpha} < Z_k < C_{F,\alpha})
\end{equation*}
where $P_{\mu=\xi}(\bigcap_{k=1}^{K(K-1)/2} -C_{F,\alpha} < Z_k < C_{F,\alpha})$ may be computed through using standard multivariate normal functions in statistical software packages.\\

Let the total sample size be $n = \Sigma_{i=1}^{K}n_i$, then for a given $r_i$ where $r_in=n_i$ and known $\sigma_i$ (for $i=1,...,K$). We only need select $n$ to achieve $P_{\theta_F}(\text{Reject }H_{0,F}) = 1 - \beta$ under a target configuration of $\mu=\xi$. This can be achieved for example with a numerical search.

\subsection{Least favourable configuration}

\begin{definition}
\label{def:lfc}
Suppose $\delta$ is a clinically relevant difference between treatments that we wish to detect. Assuming this difference is present between two treatments, say $\mu_1 = \delta$ and $\mu_2=0$, the least favourable configuration is $\mu_i=\delta/2$ for $i=3,...,K$. We denote this by $\mu = \mu_{L}$
\end{definition}

\begin{theorem}
\label{the:lfc}
Assuming $\sigma_i^2/n_i = \sigma_j^2/n_j$ for all $i,j=1,...,K$, the least favourable configuration minimizes $P_\mu(\text{Reject }H_{0,1})$ for all $\mu$ with $\mu_1 = \delta$ and $\mu_2 = 0$. 
\end{theorem}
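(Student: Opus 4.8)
The plan is to reduce the statement to a one-dimensional problem by conditioning, after stripping away everything the nuisance means $\mu_3,\dots,\mu_K$ cannot influence. Write $R_1$ for the event that the closed procedure rejects $H_{0,1}$; since $H_{0,1}$ is rejected globally exactly when every intersection test with $1\in\mathcal K$ rejects, $R_1=\bigcap_{\mathcal K\ni 1}\{\max_{k\in\mathcal K}|Z_k|>C_{\mathcal K,\alpha}\}$. Three preliminary observations drive the argument. First, under $\sigma_i^2/n_i=\sigma_j^2/n_j$ every $\sigma_{p,k}$ equals a common $\sigma_p$ and every nonzero off-diagonal correlation equals $\pm\tfrac12$; because each intersection statistic depends on its $Z_k$ only through $|Z_k|$, the rectangle probabilities defining the $C_{\mathcal K,\alpha}$ depend on the correlations only through their magnitudes, and being computed under the null they do not depend on $\mu$. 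Hence $P_\mu(R_1)$ is a function of the mean vector of $(Z_1,\dots,Z_{K(K-1)/2})$ alone, with fixed covariance. Second, $\mathbbm 1_{R_1}$ is coordinatewise nondecreasing in $(|Z_1|,\dots,|Z_{K(K-1)/2}|)$, so $R_1^{c}$ is a union of boxes symmetric about the origin. Third, $Z_1=(\hat\mu_1-\hat\mu_2)/\sigma_p$ has mean $\delta/\sigma_p$ for every admissible configuration, so the law of $Z_1$ is the same under every $\mu$ with $\mu_1=\delta,\mu_2=0$.

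Next I would exploit symmetry. The map $\mu\mapsto\delta\mathbf 1-\mu$ negates the mean of every $Z_k$, and since $R_1$ depends only on the $|Z_k|$ the substitution $Z\mapsto -Z$ gives $P_\mu(R_1)=P_{\delta\mathbf 1-\mu}(R_1)$; composing with the relabelling $T_1\leftrightarrow T_2$ (an automorphism of the comparison graph that preserves all correlation magnitudes, hence preserves every $C_{\mathcal K,\alpha}$ and $R_1$) yields an isometry $\tilde T$ of the estimator space that fixes $\mu_L$ and sends any nuisance displacement $v=(0,0,u_3,\dots,u_K)$ to $-v$. Consequently $s\mapsto P_{\mu_L+s\,v}(R_1)$ is even in $s$ for every such $v$, so $\mu_L$ is a symmetric stationary point of the objective.

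To turn symmetry into a genuine minimum I would condition. Because the $\hat\mu_i$ are independent with common variance $\sigma_p^2/2$, the estimator vector is isotropic, so a displacement along $v$ moves only the component of $\hat\mu$ along $v$ and leaves $P_{v^\perp}\hat\mu$ identical in law. Conditioning on $P_{v^\perp}\hat\mu$ reduces $R_1$ to an event on a single line, with the conditioning coordinate normal, its variance independent of $s$ and its mean affine in $s$. The decisive step is to show that on each fibre $R_1^{c}\cap(\text{line})$ is a single interval whose midpoint, at $\mu_L$, coincides with the conditional mean: granting this, the elementary fact that the normal probability of a fixed interval is maximised when the mean sits at the midpoint makes the conditional acceptance probability maximal at $s=0$ fibrewise, and integrating shows $P_\mu(R_1^{c})$ is maximised, equivalently $P_\mu(R_1)$ minimised, at $\mu_L$. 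For $K=3$ this is transparent: the transitivity identity $Z_1=Z_2-Z_3$ makes each nuisance line a level set $\{Z_1=a\}$, the reflection $\tilde T$ fixes $Z_1$ and hence each fibre, and a direct check using $C_{\{1\},\alpha}<C_{\mathcal K,\alpha}<C_{F,\alpha}$ shows the contributing boxes overlap into one interval centred at the conditional mean.

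The single-interval claim, not the symmetry, is where the difficulty lies, and I expect it to be the main obstacle for general $K$. A union of centred intervals need not be an interval, and a symmetric but disconnected acceptance set can fail to have its mass maximised at the centre, so connectedness is essential rather than cosmetic. On a nuisance line $Z_1$ is constant, so only the intersection hypotheses with $C_{\mathcal K,\alpha}\ge|Z_1|$ contribute, each supplying the interval on which its remaining comparison statistics stay below $C_{\mathcal K,\alpha}$; I would prove these overlap into one interval using consonance (Lemma~\ref{lem:consonance}), which orders the critical values along every chain and makes the global set $F$ furnish the widest such interval. The remaining gap is fibrewise centring: since $\tilde T$ permutes rather than fixes the fibres for general $v$, the peak-at-$s=0$ property must be recovered from the aggregate over fibres, which I would attack through the log-concavity in the mean of the Gaussian interval probability together with the symmetry of the induced distribution of fibre midpoints. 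This is the part that genuinely exceeds the $K=3$ template.
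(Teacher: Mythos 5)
Your setup is in one respect more careful than the paper's own argument: you work directly with the closed-testing event $R_1=\bigcap_{\mathcal K\ni 1}\{\max_{k\in\mathcal K}|Z_k|>C_{\mathcal K,\alpha}\}$ rather than with rejection of $H_{0,F}$ alone, and the involution $\mu\mapsto\delta\mathbf{1}-\mu$ composed with the relabelling $T_1\leftrightarrow T_2$ is a clean way to see that $\mu_L$ is a symmetric point of the objective. But, as you yourself flag, the proposal stops short of a proof at exactly the two places where the content of the theorem lives, so evenness of $s\mapsto P_{\mu_L+sv}(R_1)$ delivers only stationarity, not minimality. The single-interval claim is genuinely delicate: $R_1^{c}$ is the union of the boxes $B_{\mathcal K}=\{|Z_k|\le C_{\mathcal K,\alpha},\ k\in\mathcal K\}$ over $\mathcal K\ni 1$, and although every such box contains the common centred box $\{|Z_k|\le C_{\{1\},\alpha}\ \forall k\}$ (so traces on fibres meeting that core do merge), on a fibre with $|Z_1|$ large only the boxes with $C_{\mathcal K,\alpha}\ge|Z_1|$ survive, and Lemma~\ref{lem:consonance} orders critical values only along chains of nested $\mathcal K$; for two incomparable sets of equal size neither box contains the other (one has looser bounds, the other fewer constraints), so connectedness of the union of their traces does not follow from consonance as stated. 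The fibrewise centring is the larger hole: for a general nuisance direction $v$ your reflection permutes the fibres rather than fixing them, the symmetry survives only in aggregate, and, as you correctly observe, an even function built from Gaussian probabilities of non-centred and possibly disconnected sets need not peak at $s=0$. The log-concavity-plus-symmetric-midpoints idea is named but not executed, and executing it is essentially the whole difficulty.

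For comparison, the paper takes a blunter route that sidesteps your connectedness problem entirely: it reduces via Lemma~\ref{lem:consonance} to the global test of $H_{0,F}$, whose acceptance region is a \emph{single} centred box, and then perturbs one nuisance mean, $\mu_3=\delta/2+\epsilon$, arguing that the induced changes in the noncentrality vector either leave $\theta_1$ fixed, move some $\theta_k$ away from zero (which can only increase the probability of escaping a symmetric convex set, an Anderson-type monotonicity), or act on the pair $\theta_{1,3},\theta_{2,3}$ in exactly opposed directions so that their joint contribution cancels under the equal-variance assumption. That argument is itself informal about why the opposed shifts cancel and about why controlling $H_{0,F}$ suffices for $H_{0,1}$, but it never has to analyse a union of incomparable boxes. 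If you wish to salvage your route, the most economical repair is to follow the paper in first reducing to a single symmetric convex acceptance set before conditioning, at which point Anderson's inequality supplies the fibrewise monotonicity you are missing; as written, your proposal is a plausible programme with two acknowledged and unfilled gaps rather than a proof.
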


\begin{proof}
By Lemma~\ref{lem:consonance} we only need to consider $P_{\mu = \mu_{L}}(\text{Reject }H_{0,F})$. Considering the vector $Z_F = (Z_1,...,Z_{K(K-1)/2})$ we have that $E(Z_F) = \zeta_F = (\zeta_1,...,\zeta_{K(K-1)/2}) = (\theta_1/\sigma_{p,1},...,\theta_{K(K-1)/2}/\sigma_{p,K(K-1)/2})$ and denote the covariance matrix (which is as described in Section~\ref{sec:tsdun}) by $\Sigma$. Thus we have that $E(Z_k) = \theta_k$. The joint multivariate normal density function for $Z_F$ is given by,
\begin{equation*}
f_{\zeta_F}(Z_F) = 2\pi^{-k(K-1)/4}det(\Sigma)^{-1/2}e^{-1/2(Z_F-\zeta_F)\Sigma^{-1}(Z_F-\zeta_F)}.
\end{equation*}
Let $\mathcal{A}$ be the set of z-statistics where $|Z_k| < C_{F,\alpha}$ for all $k=1,...,K(K-1)/2$, then 
\begin{equation*}
P_\mu(\text{Reject } H_{0,F}) = \int_{Z_F\in\mathcal{\bar{A}}}f(Z_F)\mathrm{d}Z.
\end{equation*}

Note that for $k=1,...,K(K-1)/2$ $f(Z_F)$ is symmetric about the mean. That is $f_{\zeta_F}(Z_F+\zeta_F) = f_{\zeta_F}(-Z_F+\zeta_F)$. Also if $|\zeta_k - Z_k|$ decreases then $f_{\zeta_F}(Z_F)$ increases.  So for any constant $c>0$ if $\theta_k >0$ for all $k=1,...,K(K-1)/2$ 
we have that $f_{\zeta_F}(\boldsymbol{c}) > -f_{\zeta_F}(\boldsymbol{c})$. With the opposite being true for $c<0$ if $\theta_k < 0$. Thus $P_{\theta_F}(\text{Reject }H_{0,F})$ is minimised under $\theta_F=\boldsymbol{0}$. While for any $\theta_k + \epsilon > 0$ for $k=1,...,K(K-1)/2$ and $\epsilon > 0$ we have that $P_{\theta_F}(\text{Reject }H_{0,F})$ increases with $\epsilon$. While the equivalent is true for $\theta_k + \epsilon < 0$ and $\epsilon < 0$. 

Under the least favourable configuration assume without loss of generality that $\mu_1 = \delta$, $\mu_2=0$ and $\mu_i=\delta/2$ for $i=3,...,K$. We have that $\mu_1 - \mu_2 = \delta$ (for $\delta>0$), $\mu_1-\mu_i = \delta/2$, $\mu_2 - \mu_i = -\delta/2$ and $\mu_i-\mu_j = 0$ for all $i=1,...,K-1$ and $j$ such that $i < j \leq K$.

Consider a shift from the least favourable configuration, for some $\epsilon > 0$ let $\mu_3 = \delta/2 + \epsilon$. 

There is no change for the first two treatments with $\mu_1 - \mu_2 = \delta$ as before, and thus no impact on $P_{\theta_F}(\text{Reject }H_{0,F})$. 

For $j=4,...,K$,  $\mu_3-\mu_j = \epsilon$ which as described will increase $P_{\theta_F}(\text{Reject }H_{0,F})$ compared to the least favourable configuration. 

Finally, $\mu_1-\mu_3 = \delta/2 - \epsilon$, $\mu_2 - \mu_3 = -\delta/2 +\epsilon$, the impact of this on $(Z_F-\zeta_F)\Sigma^{-1}(Z_F-\zeta_F)$ are exactly opposed under the assumption that $\sigma_i^2/n_i = \sigma_j^2/n_j$ for all $i,j=1,...,K$. So $f(Z_F)$ is not changed by this and thus $P_{\theta_F}(\text{Reject }H_{0,F})$ is not impacted by this change from the least favourable configuration.

The equivalent to this is true for $\epsilon < 0$, and thus for any $\epsilon \neq 0$ a shift from the least favourable configuration increases $P_{\theta_F}(\text{Reject }H_{0,F})$. Clearly the same is also true if we apply a shift to any combination of $\mu_j$ for $j=3,...,K$. Suppose, for example, for $\epsilon_3,\epsilon_4 > 0$ with $\epsilon_3 \neq \epsilon_4$ we have $\mu_3 = \delta/2 + \epsilon_3$ and $\mu_=4 = \delta/2 + \epsilon_4$. All of the above holds while in addition $\mu_3 \neq \mu_4$ which would also increase $P_{\theta_F}(\text{Reject }H_{0,F})$
\end{proof}

\begin{remark}
Intuitively under the least favourable configuration $\mu_i$ sits at the midpoint of $\mu_1$ and $\mu_2$ for all $i=3,..,K$. So the probability of rejecting $H_{0,F}$ based on $Z_k > C_{F,\alpha}$ for comparisons with $i=1$ (which have mean proportional to $\delta/2$) and the probability of rejecting $H_{0,F}$ based on $Z_k < -C_{F,\alpha}$ for comparisons with $i=2$ (which have mean proportional to $-\delta/2$) are equal. 

Suppose the assumption that  $\sigma_i^2/n_i = \sigma_j^2/n_j$ for all $i,j=1,...,K$ is not true, we now have that $|E(Z_{k})| \neq E(Z_1)/2$ for all $k=2,...,2K-1$ and thus this even balance is not maintained. To restore this one might choose $\boldsymbol{a} = (a_1,...,a_K)$ such that $a\mu$ achieves the desired balance for $\mu_1,\mu_2$. However, $E(Z_k)$ may no longer be equal to $0$ for $k = 2K,...,K(K-1)/2$, which will increase $P_{\theta_F}(\text{Reject }H_{0,F})$. In such circumstances using $\boldsymbol{\mu} = (\delta,0,\delta/2,...,\delta/2)$ may remain a reasonable choice for a scenario that gives low probability of rejecting at least on null hypothesis. If a truly least favourable configuration is desired, a numerical search may be an appropriate way to find $\boldsymbol{a}$ that minimises $P_{\theta_F}(\text{Reject }H_{0,F})$.
\end{remark}

\subsection{One-sided hypothesis tests}

Our focus is on two-sided alternative hypotheses as this best reflects the intention when making all pairwise comparisons. For completeness, it is possible to consider one-sided alternatives using the same concepts. Consider null hypotheses of the form $H_{0,i,j}:\mu_i = \mu_j$ vs the alternative hypotheses $H_{1,i,j}:\mu_i >\mu_j$, for $i,j=1,...,K$ with $i\neq j$. This gives null hypotheses of the form  $H_{0k}:\theta_k \leq 0$ and alternatives $H_{1k}:\theta_k > 0$ for $k=1,...,K(K-1)$. For $i,j = 1,...,K$ in addition to the correlation we already discussed in Section~\ref{sec:tsdun}  we have that $corr{(\mu_i-\mu_j),(\mu_j-\mu_i)} = -1$. The test statistic for the one-sided all pairwise Dunnett test for $H_{0,F}$ is constructed as $max(Z_1,...,Z_{K(K-1)})$. 

\section{Comparison of performance}

\subsection{Bonferroni adjustment}

A Bonferroni correction \citep{bonferroni1936teoria,friedman2015fundamentals} strongly controls the FWER. For $H_{0,1},...,H_{0,K(K-1)/2}$ globally the hypothesis is rejected when $Z_k > \phi^{-1}[\alpha/\{K(K-1)/2\}]$ for all $k=1,...,K(K-1)/2$. This procedure implies a closed testing procedure. For any $\mathcal{K} \subseteq (1,...,K(K-1)/2)$ the intersection hypothesis $H_{0,\mathcal{K}} = \bigcap_{k\in\mathcal{K}} H_{0k}$ is implicitly rejected when $\bigcup_{k\in\mathcal{K}} (Z_k > \phi^{-1}[1-\alpha/\{K(K-1)/2\}])$. 

\begin{theorem}
\label{the:bonferonni}
The probability of globally rejecting $H_{0k}$ for $k=1,...,K(K-1)/2$ is higher for the all pairwise Dunnett testing procedure than for a Bonferroni adjusted testing procedure.
\end{theorem}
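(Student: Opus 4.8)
The plan is to reduce the comparison of the two disjunctive powers to a comparison of two scalar critical values applied to the \emph{same} statistic $\max_k|Z_k|$, and then to show the Dunnett threshold is strictly the smaller of the two. First I would note that for each procedure the event ``at least one $H_{0,k}$ is rejected globally'' coincides with $\{\max_k|Z_k|>C\}$ for a suitable constant $C$. For the all pairwise Dunnett procedure this is exactly consonance: by Lemma~\ref{lem:consonance} and Remark~\ref{rem:2}, rejecting $H_{0,F}$ is equivalent to rejecting at least one elementary null globally, and $H_{0,F}$ is rejected precisely when $\max_k|Z_k|>C_{F,\alpha}$. For the Bonferroni procedure I would argue directly: since every intersection test uses one common critical value $C_B$, the null $H_{0,k}$ is rejected globally if and only if $|Z_k|>C_B$, because $|Z_k|>C_B$ forces $\max_{k'\in\mathcal{K}}|Z_{k'}|>C_B$ for every $\mathcal{K}\ni k$. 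Hence both disjunctive powers have the form $P_{\mu=\xi}(\max_k|Z_k|>C)$; since $\max_k|Z_k|$ has a strictly increasing continuous distribution function on $(0,\infty)$ under any configuration, the map $c\mapsto P_{\mu=\xi}(\max_k|Z_k|>c)$ is strictly decreasing, and the theorem follows once I establish $C_{F,\alpha}<C_B$.

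Both thresholds are calibrated under the global null $\theta_F=\boldsymbol 0$. The Dunnett value is defined so that $P_{\theta_F=\boldsymbol 0}(\max_k|Z_k|>C_{F,\alpha})=\alpha$ exactly, whereas $C_B$ is calibrated by the union bound: with $A_k=\{|Z_k|>C_B\}$ and $P(A_k)=\alpha/m$ for $m=K(K-1)/2\ge 3$, one has $\sum_k P(A_k)=\alpha$. Writing $N=\sum_k\mathbf{1}_{A_k}$, the key observation is that $\sum_k P(A_k)=E[N]>P(N\ge 1)=P_{\theta_F=\boldsymbol 0}(\max_k|Z_k|>C_B)$ whenever $P(N\ge 2)>0$. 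This is where the hard part lies: I must show some pair of rejection regions genuinely overlaps, i.e. $P(A_k\cap A_l)>0$ for at least one pair. Under the joint multivariate normal law of $(Z_1,\dots,Z_{K(K-1)/2})$ every pair $(Z_k,Z_l)$ has a nondegenerate bivariate distribution, so $P(A_k\cap A_l)>0$ holds for every pair---whether the two comparisons share a treatment (nonzero correlation) or not (independence). Consequently $P_{\theta_F=\boldsymbol 0}(\max_k|Z_k|>C_B)<\alpha$.

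Finally I would assemble the pieces. Setting $g(c)=P_{\theta_F=\boldsymbol 0}(\max_k|Z_k|>c)$, we have $g(C_{F,\alpha})=\alpha>g(C_B)$, and strict monotonicity of $g$ forces $C_{F,\alpha}<C_B$. Applying the same monotonicity under the alternative $\mu=\xi$ gives $P_{\mu=\xi}(\max_k|Z_k|>C_{F,\alpha})>P_{\mu=\xi}(\max_k|Z_k|>C_B)$, which is the claimed strict dominance of the all pairwise Dunnett disjunctive power over the Bonferroni one. The only genuine obstacle is the strict positivity of the pairwise overlap, which is precisely where the continuity of the joint normal law (and the correlation structure of Section~\ref{sec:tsdun}) makes Dunnett strictly beat Bonferroni rather than merely match it.
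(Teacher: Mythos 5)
Your proposal is correct and takes essentially the same route as the paper: both arguments reduce the claim to showing that the Bonferroni procedure's implied test of $H_{0,F}$ spends strictly less than $\alpha$ under the global null (via strictness of the union bound), deduce that the Dunnett critical value $C_{F,\alpha}$ is strictly smaller than the Bonferroni one, and then conclude by consonance and monotonicity of the rejection probability in the critical value. Your write-up is in fact somewhat more careful than the paper's, since you explicitly justify why the union bound is strict (positive probability of two simultaneous exceedances under the nondegenerate joint normal law) and why each procedure's disjunctive rejection event coincides with a threshold crossing of $\max_k|Z_k|$.
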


\begin{proof}
Consider the probability of rejecting the global intersection hypothesis $H_{0,F}$. Using the Bonferroni procedure under $H_{0,F}$ we have that
\begin{equation*}
\begin{split}
P_{\theta_F}(\text{Reject }H_{0,F}) &= P_{\theta_F}\{\bigcup_{k\in\mathcal{K}}(Z_k > \phi^{-1}[1-\alpha/\{K(K-1)/2\}])\}\\
&<\Sigma_{k\in\mathcal{K}}P\{(Z_k > \phi^{-1}[1-\alpha/\{K(K-1)/2\}])\}\\
&=\alpha.
\end{split}
\end{equation*}
So by construction of the all pairwise Dunnett procedure, we have that $C_{F,\alpha} < \phi^{-1}[1-\alpha/\{K(K-1)/2\}]$. Thus the probability of rejecting $H_{0,F}$ is higher for the all pairwise Dunnett test. 

By Lemma~\ref{lem:consonance} for any $\mathcal{K} \subseteq (1,...,K(K-1)/2)$, $C_{K,\alpha} < C_{F,\alpha} < \phi^{-1}[1-\alpha/\{K(K-1)/2\}]$. Thus for any local hypothesis test, there is a higher probability of rejection under the all pairwise hypothesis test, hence the probability of globally rejecting $H_{0k}$ for all $k=1,...,K(K-1)/2$ is higher for the all pairwise Dunnett testing procedure.
\end{proof}

\subsection{Gatekeeping procedure}

Gatekeeping procedures also offer strong control of the FWER. For such a procedure the hypotheses are ordered before the trial begins, then at the end of the trial each hypothesis is tested at the full $\alpha$ in order. If a null hypothesis is rejected then the next hypothesis is tested. If at any point there is a failure to reject a null hypothesis the procedure stops and we fail to reject all subsequent null hypotheses. 

\begin{remark}
Such a procedure is inherently reliant on the order in which the hypotheses are tested. Suppose, following the principle of clinical equipoise \citep{freedman2017equipoise} with no assumption about which treatment(s) are superior, we randomly order the hypotheses. Consider the least favourable configuration, with randomly order hypotheses there is only one order where the best treatment effect is the first hypothesis, while there are $(K-2)(K-3)/2$ true null hypotheses where by construction the probability of rejection is $\alpha$. Considering all possible combinations it should be clear that there is little motivation for this method without clear scientific motivation for the ordering (such as all treatments being dose levels of the same treatment). 
\end{remark}

\subsection{Tukey's range test}

Tukey's range test \citep{tukey1949} has some similarities to the proposed all pairwise Dunnett test. The two procedures are directly comparable in the case of the global null hypothesis, as defined in Section \ref{sec:MA}).

Tukey's range test uses the maximum difference between observed means to test the global null hypothesis. Let $\hat{\mu}_{min} = min(\hat{\mu}_1,...,\hat{\mu}_K)$ and $\hat{\mu}_{max} = max(\hat{\mu}_1,...,\hat{\mu}_K)$, supposing $\sigma_1^2=\sigma_2^2=,...,=\sigma_K^2$ and $n_1^2=n_2^2=,...,=n_K^2$ and denoting the pooled sample variance of $\mu_{min}$ and $\mu_{max}$ by $S^2$ the test statistic is given by 
\begin{equation*}
\frac{\hat{\mu}_{max}-\hat{\mu}_{min}}{S\sqrt{2/n}}
\end{equation*}
which follows the studentised range distribution with $nK - K$ degrees of freedom. Thus when these assumptions of equal sample size and variance are met the tests of the global null constructed using either the all pairwise Dunnett and Tukey's range test are asymptotically equivalent, spending the full $\alpha$ (note that in the scenario of unknown variance, one may still construct the all pairwise Dunnett test using the multivariate t-distribution).

The tests for other intersections are not defined for Tukey's range test. Consider the case where comparing two Z-values that have no individual arms in common it is not clear how one might construct this test using Tukey's method. For example, consider the test of a null hypothesis of the form $H_0: \mu_1 = \mu_2 \cap \mu_3 = \mu_4$, under the all pairwise Dunnett method this comes down to that values of $\hat{\mu}_1 - \hat{\mu}_2$ and $\hat{\mu}_3 - \hat{\mu}_4$ while considering the range through the use of $\hat{\mu}_{max} - \hat{\mu}_{min}$ does not properly define an equivalent test. Since most of the benefit of the all pairwise Dunnett procedure comes from fully defining the closed testing procedure, this presents an advantage over Tukey's range test, while also not requiring the distributional assumptions imposed by Tukey's test. We demonstrate clearly the advantage the fully defined closed testing procedure yields in Section~\ref{sec:sim}.

\subsection{Simulation study}
\label{sec:sim}

To evaluate the performance of the proposed all-pairwise Dunnett test we conduct a small simulation study to compare its performance to a Bonferroni adjustment and Tukey's range test (denoted Global). Gatekeeping is not included here due to the choice of order that is necessary for which prior scientific insight is necessary. We simulate $10^6$ hypothetical studies, testing 4 different treatments under three different set of true treatment means. The study is designed to control the FWER at 5\% and achieve a power of 90\% for a standardized treatment difference of 0.3743. The required sample size in this case is 809 patients per arm. \\

Table \ref{tab:sim} shows the estimated probability of rejection of at least 1 hypothesis (first column) as well as the probability of rejecting a number of hypotheses for different true mean vectors. As expected we find that all three approaches provide control of the family-wise error rate although slight conservatism can be seen for the Bonferroni method. This is in contrast to an unadjusted test which exhibits a 4-fold increase error.\\

When considering situations where some treatments have a different mean it is notable that all three approaches have almost identical power to reject at least one hypothesis. At the same time the number of rejections is consistently larger for the global test then Bonferroni which is dominated by the all-pairwise Dunnett test as expected. The difference is most notable in the last case where the probability to reject 4 hypothesis is at 11\% notably larger for the Dunnett test than for the global test.

\begin{table}[!ht]
\begin{tabular}{c c | c c c c c c c}
$\boldsymbol{\mu}$ & Test & Reject $\geq 1$ & 1 & 2 & 3 & 4 & 5 & 6\\\hline
(0,0,0,0) & Dunnett    & 0.05 & 0.04 & 0.01 & 0.00 & 0.00 & 0.00 & 0.00\\
(0,0,0,0) & Global     & 0.05 & 0.04 & 0.01 & 0.00 & 0.00 & 0.00 & 0.00\\
(0,0,0,0) & Bonferroni & 0.04 & 0.03 & 0.01 & 0.00 & 0.00 & 0.00 & 0.00\\
(0,0,0,0) & Unadjusted & 0.20 & 0.13 & 0.06 & 0.02 & 0.00 & 0.00 & 0.00\\
&&&&&&&\\
(10,5,5,0) & Dunnett    & 0.78 & 0.30 & 0.24 & 0.21 & 0.02 & 0.01 & 0.00\\
(10,5,5,0) & Global     & 0.78 & 0.33 & 0.27 & 0.17 & 0.01 & 0.00 & 0.00\\
(10,5,5,0) & Bonferroni & 0.76 & 0.34 & 0.26 & 0.15 & 0.01 & 0.00 & 0.00\\
&&&&&&&\\
(10,10,0,0) & Dunnett    & 0.96 & 0.05 & 0.17 & 0.19 & 0.55 & 0.01 & 0.00\\
(10,10,0,0) & Global     & 0.96 & 0.06 & 0.22 & 0.24 & 0.44 & 0.00 & 0.00\\
(10,10,0,0) & Bonferroni & 0.96 & 0.07 & 0.24 & 0.24 & 0.41 & 0.00 & 0.00\\
\end{tabular}
\caption{Probabilities of rejecting multiple hypotheses under each testing procedure under the Global Null and the LFC.\label{tab:sim}}
\end{table}

\section{Group sequential}

\subsection{Multi-stage recruitment and notation}
\label{sec:msrec}

The all pairwise Dunnett methodology may be extended to offer a suitable testing structure for multi-stage (adaptive) trials. Suppose we have $K > 2$ treatments for which we wish to test null hypotheses $H_{0,k}: \theta_k = 0$ vs alternative hypotheses $H_{1,k}:\theta_k \neq 0$ for $k = 1,...,K(K-1)/2$ as described in Section~\ref{sec:MA}. In the multi-stage setting, the trial is conducted over $Q >1$ stages, with an analysis after each stage of the trial. 

In this multi-stage setting at analysis $q=1,...,Q$ we have recruited $n^{(q)}_i$ patients to treatment $i=1,...,K$. We use the superscript $(q)$ to show a value is based on all patients recruited up to analysis $q = 1,...,Q$ of the trial, with everything else following the construction given in Section~\ref{sec:MA}. For example, $Z_k^{(q)}$ would be constructed using the corresponding $n_i^{(q)}$ and $n_j^{(q)}$ for all $k=1,...,K(K-1)/2$ and $q=1,...,Q$. The correlation structure follows for these z-statistics follows as in Section~\ref{sec:tsdun}. For $q=1,...,Q$ and $k_1,k_2 = 1,...,K(K-1)/2$ then if $k_1=k_2$, $corr(Z^{(q)}_{k_1},Z^{(q)}_{k_2}) = 1$, if $k_1 = i_1,j_1$ and $k=i_2,j_2$ where $i_1,j_1,i_2,j_2 \in (1,...,K(K-1)/2)$ with $i_1 \neq i_2,j_2$ and $j_1 \neq i_2,j_2$ then $corr(Z_{k_1},Z_{k_2}) = 0$, otherwise there is one shared treatment, for example $k_1 = i,j_1$ and $k=i,j_2$, in which case,
\begin{equation*}
corr(Z_{k_1},Z_{k_2}) = \frac{\sigma^2_i/n^{(q)}_i}{\sigma^{(q)}_{p,k_1}\sigma^{(q)}_{p,k_2}}.
\end{equation*}

\subsection{Defining the testing procedure}
\label{sec:mstest}

To construct the testing procedure in the multi-stage setting we follow the approach of \cite{urach2016multi}. Defining the group sequential testing boundary for each intersection hypothesis. As such we may construct an overall closed testing procedure and thus ensure strong control of the FWER. 

\begin{definition}
\label{def:gsdunint}
Let $\mathcal{K} \subset (1,...,K(K-1)/2)$, we define the corresponding intersection hypothesis $H_{0,\mathcal{K}} = \bigcap_{k\in \mathcal{K}} H_{0,k}$. For each stage $q=1,...,Q$ the test statistic for $H_{0,\mathcal{K}}$ is given by
\begin{equation*}
Z_{\mathcal{K}}^{(q)} = max(|Z_{k_1}^{(q)}|,|Z_{k_2}^{(q)}|,...),
\end{equation*}
where $(k_1,k_2,...)=\mathcal{K}$. These test statistics are sequentially compared to the testing boundaries $C_{\mathcal{K},\alpha} = (C_{\mathcal{K},\alpha}^{(1)},...,C_{\mathcal{K},\alpha}^{(q)})$, rejecting $H_{0,\mathcal{K}}$ (where $C_{\mathcal{K},\alpha}^{(q)}>0$ for all $q=1,...,Q$) at analysis $q$ if $Z_{\mathcal{K}}^{(q)} > C_{\mathcal{K},\alpha}^{(q)}$. Suppose that $H_{0,\mathcal{K}}$ is rejected at analysis $q^\star = 1,...,Q-1$, then no further analyses for $H_{0,\mathcal{K}}$ are conducted for any $q>q^\star$. If at analysis $q=1,...,Q-1$ $Z_{\mathcal{K}}^{(q)} < C_{\mathcal{K},\alpha}^{(q)}$, then the hypothesis will be tested again at analysis $q+1$. If at analysis $Q$ $Z_{\mathcal{K}}^{(q)} < C_{\mathcal{K},\alpha}^{(q)}$ we fail to reject $H_{0,\mathcal{K}}$.

To ensure $H_{0,\mathcal{K}}$ is tested at some pre-specified $\alpha$, $C_{\mathcal{K},\alpha}$ chosen such that
\begin{equation*}
P(\bigcup_{q=1}^{Q}\bigcup_{k\in\mathcal{K}} Z_{\mathcal{K}}^{(q)} > C_{\mathcal{K},\alpha}^{(q)}) = \alpha
\end{equation*}
\end{definition}

\begin{remark}
For each $\mathcal{K} \subseteq \{1,...,K(K-1)/2\}$ we may construct the test for $H_{0,\mathcal{K}}$ at level $\alpha$ using the group sequential all pairwise Dunnett test. Thus we can construct an overall closed testing procedure as defined in Section~\ref{sec:tsdun} and thus strongly control the FWER.
\end{remark}

Computation of $C_{\mathcal{K},\alpha}$ follows in a similar way to that set forth by \cite{magirr2012generalized}. 
We use the independent increments of the z-statistic at each stage trial to break them down by the observations in each stage. In the first stage of the trial $n'^{(1)}_i=n^{(1)}_i$ patients are recruited to each treatment $i=1,...,K$. For all other stages $h=2,...,K$, $n'^{(q)}_i=n^{(q)}_i - n^{(q-1)}_i$ patients are recruited to treatment $i=1,...,K$ in that stage; we use this superscript throughout to denote values that are computed based on the corresponding $n'^{(q)}_i$ for $i=1,...,K$ and $q=1,...,Q$ in the same way we described in Section~\ref{sec:msrec}. We assume that the proportion of patients recruited to each treatment remain constant over each stage of the trial.

For a vector of constants $\nu_{\mathcal{K}} = (\nu_{k_1},\nu_{k_2},...)$ where $(k_1,k_2,...)=\mathcal{K}$, each treatment $k\in\mathcal{K}$ and stage $q=1,...,Q$ we define the event,
\begin{equation*}
R_{\mathcal{K}}^{(q)}(\nu_k) = \{-C_{\mathcal{K},\alpha}^{(q)} + \Sigma_{\lambda=1}^{Q-1}w_k^{(\lambda,q)}(\theta_k - \nu_k)/\sigma_{p,k}'^{(\lambda)} < Z_k^{(q)} < C_{\mathcal{K},\alpha}^{(q)} + \Sigma_{\lambda=1}^{Q-1}w_k^{(\lambda,q)}(\theta_k - \nu_k)/\sigma_{p,k}'^{(\lambda)}\}
\end{equation*}
If $\theta_k = \nu_k$ for all $k=1,...,K(K-1)/2$ then $\bar{R}_{\mathcal{K}}(\nu_{\mathcal{K}}) = \bigcap_{k\in\mathcal{K}}\bigcap_{q=1}^{Q} R_{\mathcal{K}}^{(q)}(\nu_k)$
is the event that we fail to reject $H_{0,\mathcal{K}}$. Rearranging $R_{\mathcal{K}}^{(q)}(\nu_k)$ we have that $R_{\mathcal{K}}^{(q)}(\nu_k) = \{l_{k,\mathcal{K},\alpha}^{(q)} <  Z_k'^{(q)} < u_{k,\mathcal{K},\alpha}^{(q)}\}$
where $l_{k,\mathcal{K},\alpha}^{(q)}(\nu_k) = \{-C_{\mathcal{K},\alpha}^{(q)} - \Sigma_{\lambda=1}^{(Q-1)}w_k^{(\lambda,q)} t_k^{(q)} - w_k^{(\lambda,q)}\nu_k/\sigma_{p,k}'^{(\lambda)}\}/w_k^{(\lambda,q)}$ and $u_{k,\mathcal{K},\alpha}^{(q)}(\nu_k) = \{C_{\mathcal{K},\alpha}^{(q)} - \Sigma_{\lambda=1}^{(Q-1)}w_k^{(\lambda,q)} t_k^{(q)} - w_k^{(\lambda,q)}\nu_k/\sigma_{p,k}'^{(\lambda)}\}/w_k^{(\lambda,q)}$ with $t_k^{(q)} = (\hat{\theta}_k'^{(q)} - \theta)/\sigma_{p,k}'^{(q)}$. For each $q=1,...,Q-1$ $\boldsymbol{t}^{(q)} = (t_{k_1}^{(q)},t_{k_2}^{(q)},...)$ follows a multivariate normal distribution with a known correlation matrix, while $\boldsymbol{t}^{(1)},...,\boldsymbol{t}^{(Q-1)}$ are independent. Defining $\Omega_q$ as the sample space of $\boldsymbol{t}^{(q)}$, $L_{\mathcal{K},\alpha}^{(q)}(\nu_{\mathcal{K}})= \{l_{{k_1},\mathcal{K},\alpha}^{(q)}(\nu_{k_1}),l_{k_2,\mathcal{K},\alpha}^{(q)}(\nu_{k_2}),...\}$, $U_{\mathcal{K},\alpha}^{(q)}(\nu_{\mathcal{K}})=\{u_{{k_1},\mathcal{K},\alpha}^{(q)}(\nu_{k_1}),u_{k_2,\mathcal{K},\alpha}^{(q)}(\nu_{k_2}),...\}$ and $\Sigma'^{(q)}$ as the correlation matrix of $Z'^{(q)}_k$ for $q=1,...,Q$, and with $\Phi\{L_{\mathcal{K},\alpha}^{(q)}(\nu_{\mathcal{K}}),U_{\mathcal{K},\alpha}^{(q)}(\nu_{\mathcal{K}}),\Sigma'^{(q)}\}$ denoting the result of integrating the normal density function with mean zero and correlation matrix $\Sigma'^{(q)}$ over a region defined by lower limits $L_{\mathcal{K},\alpha}^{(q)}(\nu_{\mathcal{K}})$ and upper limits $U_{\mathcal{K},\alpha}^{(q)}(\nu_{\mathcal{K}})$, 
\begin{equation}
\begin{split}
P\{\bar{R}_{\mathcal{K}}(\nu_{\mathcal{K}})\} &= E(...E[P\{\bar{R}_{\mathcal{K}}(\nu_{\mathcal{K}})\}|\boldsymbol{t}^{(q)}]...|\boldsymbol{t}^{(1)}) \\
&= \int_{\Omega_{1}}...\int_{\Omega_{Q}} \Pi_{q=1}^Q\Phi\{L_{\mathcal{K}\alpha}^{(q)}(\nu_{\mathcal{K}}),U_{\mathcal{K},\alpha}^{(q)}(\nu_{\mathcal{K}}),\Sigma'^{(q)}\}\mathrm{d}\boldsymbol{t}^{(q)}...\mathrm{d}\boldsymbol{t}^{(1)}.\\
\end{split}
\label{eq:int}
\end{equation}

Equation~\ref{eq:int} can be evaluated numerically for all choices of the design parameters ($n_k^{(q)}$ and $C_{\mathcal{K},\alpha}$). Note that if $\nu_{\mathcal{K}} = \boldsymbol{0}$ this only depends on the ratios of the $n_k^{(q)}$. The consequence of this is that we may define testing structure based on these ratios where $r_i^{(q)}$ are used as described in Section~\ref{sec:masize}.

\cite{magirr2012generalized} discuss several familiar group sequential options \citep{jennison1999group,pocock1977group,o1979multiple,whitehead1997design,gordon1983discrete} which we may apply in this setting under the right constraints. Of particular interest is the $\alpha$-spending approach \citep{gordon1983discrete}, using Equation~\ref{eq:int} to iteratively compute  $C_{\mathcal{K},\alpha}^{(q)}$. Suppose we have a pre-defined monotonically increasing $\alpha$-spending function, $\alpha^\star(\tau)$, with $\alpha^\star(0)=0$ and $\alpha^\star(1)=\alpha$. As noted by \cite{magirr2012generalized} this requires some definition of information time, $0<\tau<1$, in the multi-arm setting \citep{follmann1994monitoring}. At interim analysis $\lambda=1,...,Q$ we define the error rate as $\alpha^{(q)} = \alpha^\star(\tau_q)$ and given $C_{\mathcal{K},\alpha}^{(q)}$ for $q=1,...,\lambda$ choose $C_{\mathcal{K},\alpha}^{(q)}$ such that
\begin{equation}
\label{eq:er}
\alpha^{(q)} = 1 - \int_{\Omega_{1}}...\int_{\Omega_{\lambda}} \Pi_{q=1}^{\lambda}\Phi\{L_{\mathcal{K},\alpha}^{(q)}(\boldsymbol{0}),U_{\mathcal{K},\alpha}^{(q)}(\boldsymbol{0}),\Sigma'^{(q)}\}\mathrm{d}\boldsymbol{t}^{(q)}...\mathrm{d}\boldsymbol{t}^{(1)}.
\end{equation}

For any treatment $i=1,...,K$ should all corresponding $H_{0,k}$ be globally rejected (for $k=1,...,K(K-1)/2$) it may be removed from the trial. Such behaviour would reduce the expected sample size of the trial as is typical for group sequential designs \citep{jennison1999group}. 

\subsection{Generalised all pairwise Dunnett test}

\begin{corollary}
\label{cor:msconsonance}
For every $\mathcal{K} \subseteq \{1,...,K(K-1)/2\}$ let the test of the corresponding intersection hypothesis $H_{0,\mathcal{K}} = \bigcap_{k\in \mathcal{K}} H_{0,k}$ be constructed using the group sequential all pairwise Dunnett test using an error spending function $\alpha^\star(\tau)$. This testing procedure is consonant. That is for $\mathcal{K}_1,\mathcal{K}_2\subseteq \{1,...,K(K-1)/2\}$ with corresponding intersections $H_{0\mathcal{K}_1} = \bigcap_{k\in\mathcal{K}_1}H_{0k}$ and  $H_{0\mathcal{K}_2} = \bigcap_{k\in\mathcal{K}_2}H_{0k}$, if $\mathcal{K}_1 \subset \mathcal{K}_2$ then $C_{\mathcal{K}_1,\alpha}^{(q)} < C_{\mathcal{K}_2,\alpha}^{(q)}$ for all $q=1,...,Q$.
\end{corollary}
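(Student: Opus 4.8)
The plan is to run an induction on the analysis number $q$, mirroring the single-stage argument of Lemma~\ref{lem:consonance} but propagating it through the error-spending recursion of Definition~\ref{def:gsdunint}. Throughout, write $S_{\mathcal{K}}^{(q)} = \bigcap_{\lambda=1}^{q}\{Z_{\mathcal{K}}^{(\lambda)} \le C_{\mathcal{K},\alpha}^{(\lambda)}\}$ for the event of surviving (failing to reject) $H_{0,\mathcal{K}}$ through stage $q$. The two index sets share the \emph{same} spending function $\alpha^\star$, so by construction $P_{H_0}(\overline{S_{\mathcal{K}}^{(q)}}) = \alpha^\star(\tau_q)$ for \emph{both} $\mathcal{K}=\mathcal{K}_1$ and $\mathcal{K}=\mathcal{K}_2$ and every $q$. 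The base case $q=1$ is exactly Lemma~\ref{lem:consonance}: since $\mathcal{K}_1\subset\mathcal{K}_2$ gives $Z_{\mathcal{K}_1}^{(1)}=\max_{k\in\mathcal{K}_1}|Z_k^{(1)}|\le \max_{k\in\mathcal{K}_2}|Z_k^{(1)}| = Z_{\mathcal{K}_2}^{(1)}$ pointwise, the inclusion $\{Z_{\mathcal{K}_1}^{(1)}>c\}\subseteq\{Z_{\mathcal{K}_2}^{(1)}>c\}$ holds, so the (continuous, strictly decreasing) first-stage crossing probability is larger for $\mathcal{K}_2$ at every threshold; matching the common target $\alpha^\star(\tau_1)$ therefore forces $C_{\mathcal{K}_1,\alpha}^{(1)}<C_{\mathcal{K}_2,\alpha}^{(1)}$.

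For the inductive step, assume $C_{\mathcal{K}_1,\alpha}^{(\lambda)}<C_{\mathcal{K}_2,\alpha}^{(\lambda)}$ for $\lambda=1,\dots,q-1$. The spending construction makes the cumulative error identical at stage $q-1$, so $P_{H_0}(S_{\mathcal{K}_1}^{(q-1)})=P_{H_0}(S_{\mathcal{K}_2}^{(q-1)})=1-\alpha^\star(\tau_{q-1})$, even though the two continuation regions differ. The incremental error spent at stage $q$ is likewise common, equal to $\alpha^\star(\tau_q)-\alpha^\star(\tau_{q-1})$. Factoring this increment as $P_{H_0}(S_{\mathcal{K}}^{(q-1)})\,P_{H_0}(Z_{\mathcal{K}}^{(q)}>C_{\mathcal{K},\alpha}^{(q)}\mid S_{\mathcal{K}}^{(q-1)})$ and cancelling the equal leading factors, the two boundaries $C_{\mathcal{K}_1,\alpha}^{(q)}$ and $C_{\mathcal{K}_2,\alpha}^{(q)}$ are seen to solve the \emph{same} equation for the conditional crossing probability. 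Because this conditional probability is continuous and strictly decreasing in the threshold, it then suffices to show that, conditional on having survived to stage $q$, the statistic $Z_{\mathcal{K}_2}^{(q)}$ has a uniformly heavier upper tail than $Z_{\mathcal{K}_1}^{(q)}$; the common conditional crossing probability will then pin $C_{\mathcal{K}_2,\alpha}^{(q)}$ strictly above $C_{\mathcal{K}_1,\alpha}^{(q)}$.

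The hard part is precisely this conditional tail comparison, because enlarging the index set introduces two competing effects: $\mathcal{K}_2$ carries the extra constraints $\{|Z_k^{(\lambda)}|\le C_{\mathcal{K}_2,\alpha}^{(\lambda)}\}$ for $k\in\mathcal{K}_2\setminus\mathcal{K}_1$, which shrink its continuation region, while its larger early boundaries (the inductive hypothesis) enlarge it, and the conditioning events $S_{\mathcal{K}_1}^{(q-1)}$ and $S_{\mathcal{K}_2}^{(q-1)}$ are genuinely different. I would resolve this using the independent-increments structure underlying \eqref{eq:er}: writing each $Z_k^{(q)}$ through its stagewise increments, the fresh stage-$q$ increment is independent of the past that determines $S_{\mathcal{K}}^{(q-1)}$, and under $H_0$ the mean-zero multivariate normal law together with the sign-symmetric survival event leaves the conditional law symmetric about zero. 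Combining this with the pointwise domination $Z_{\mathcal{K}_2}^{(q)}\ge Z_{\mathcal{K}_1}^{(q)}$ (a maximum over a superset) and the equality $P_{H_0}(S_{\mathcal{K}_1}^{(q-1)})=P_{H_0}(S_{\mathcal{K}_2}^{(q-1)})$ established above is intended to deliver the required conditional tail ordering.

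Strictness follows because $\mathcal{K}_2\setminus\mathcal{K}_1$ contributes at least one further comparison whose statistic exceeds any finite threshold with positive conditional probability, so the conditional crossing probability for $\mathcal{K}_2$ is \emph{strictly} larger at the common threshold $C_{\mathcal{K}_1,\alpha}^{(q)}$. Since this reasoning does not depend on $q$, the induction closes and $C_{\mathcal{K}_1,\alpha}^{(q)}<C_{\mathcal{K}_2,\alpha}^{(q)}$ holds for all $q=1,\dots,Q$, which is the asserted consonance.
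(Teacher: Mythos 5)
Your overall architecture matches the paper's: induct on the analysis index, use the error-spending recursion to fix the cumulative error $\alpha^{(q)}$ (hence the stage-$q$ increment) to be the same for $\mathcal{K}_1$ and $\mathcal{K}_2$, and then rerun a Lemma~\ref{lem:consonance}-style comparison at each stage to force $C_{\mathcal{K}_1,\alpha}^{(q)} < C_{\mathcal{K}_2,\alpha}^{(q)}$. The base case and the reduction to a single per-stage comparison are sound.

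The gap is the step you yourself flag as ``the hard part'': you never actually prove that
\begin{equation*}
P_{H_0}\bigl(Z_{\mathcal{K}_2}^{(q)} > c,\ S_{\mathcal{K}_2}^{(q-1)}\bigr) \;>\; P_{H_0}\bigl(Z_{\mathcal{K}_1}^{(q)} > c,\ S_{\mathcal{K}_1}^{(q-1)}\bigr)
\end{equation*}
at the common threshold $c = C_{\mathcal{K}_1,\alpha}^{(q)}$. The ingredients you list do not assemble into this. Pointwise domination $Z_{\mathcal{K}_2}^{(q)} \ge Z_{\mathcal{K}_1}^{(q)}$ gives an event inclusion only when both sides are intersected with the \emph{same} event, but here the survival events differ in two opposing ways: $S_{\mathcal{K}_2}^{(q-1)}$ imposes extra constraints on the coordinates in $\mathcal{K}_2\setminus\mathcal{K}_1$ (which, through the correlation with the shared coordinates and with the stage-$q$ cumulative statistics, can only be handled with an explicit argument), while its larger early boundaries relax the constraints on the shared coordinates. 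Symmetry of the conditional law about zero and equality of $P(S_{\mathcal{K}_1}^{(q-1)}) = P(S_{\mathcal{K}_2}^{(q-1)})$ do not by themselves order the upper tails of two different maxima under two different conditionings; your strictness argument in the final paragraph presupposes that the non-strict inequality already holds, which is precisely what is missing. The sentence ``is intended to deliver the required conditional tail ordering'' is an acknowledgement that the proof is not closed at its decisive point. (For what it is worth, the paper's own proof is also terse here, disposing of this step with ``clearly following the arguments from the proof of Lemma~\ref{lem:consonance}''; but a complete argument must compare the joint events $\{Z_{\mathcal{K}}^{(q)} \le c\}\cap S_{\mathcal{K}}^{(q-1)}$ directly, e.g.\ by exploiting the product structure over independent stage-wise increments in Equation~\ref{eq:int} and the fact that the two survival events have equal probability while the $\mathcal{K}_2$ event constrains strictly more coordinates at stage $q$.)
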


\begin{definition}
For the global intersection hypothesis $H_{0,F} = \bigcap_{k=1}^{K(K-1)/2}H_{0k}$ we define testing boundaries $C_{\mathcal{F},\alpha} = (C_{\mathcal{F},\alpha}^{(1)},...,C_{\mathcal{F},\alpha}^{(q)})$. Under the generalised all pairwise Dunnett test we reject $H_{0k}:\theta_k = 0$ at analysis $q$ if $|Z_{k}^{(q)}| > C_{\mathcal{F},\alpha}^{(q)}$ for all $k=1,...,K(K-1)/2$.
\end{definition}

\begin{remark}
By Corollary~\ref{cor:msconsonance} $C_{\mathcal{F},\alpha}^{(q)}$ is conservative for all $\mathcal{K} \subset F$. Thus for the generalised all pairwise Dunnett implicitly all intersection hypotheses are tested such that under $H_{0,\mathcal{K}}$ $P(\text{Reject }H_{0,\mathcal{K}}) \leq \alpha$. Further to this if we reject $H_{0k}$ we simultaneously reject all associated $H_{0,\mathcal{K}}$ for which $k\in\mathcal{K}$. Through this the generalised all pairwise Dunnett test implies a closed testing procedure and thus strongly controls the Family-Wise Error Rate. 
\end{remark}

\subsection{Other results}

\begin{corollary}
\label{cor:lfc}
If $\sigma_i^2/n_i^{(1)} = \sigma_j^2/n_j^{(1)}$ for all $i,j=1,...,K$, the least favourable configuration minimizes $P(\text{Globally reject }H_{01})$. 
\end{corollary}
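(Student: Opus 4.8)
The plan is to follow the template of the single-stage argument in Theorem~\ref{the:lfc}, replacing the single multivariate normal density by the per-stage product representation in Equation~\ref{eq:int}. First I would invoke Corollary~\ref{cor:msconsonance}: because the generalised all pairwise Dunnett test is consonant, globally rejecting $H_{01}$ is governed by rejection of $H_{0,F}$ exactly as in the single-stage case (where Lemma~\ref{lem:consonance} played this role), so it suffices to show that the least favourable configuration minimises $P(\text{Reject }H_{0,F}) = 1 - P\{\bar{R}_F\}$, where the failure-to-reject probability $P\{\bar{R}_F\}$ is evaluated at the true configuration via the product representation of Equation~\ref{eq:int}.

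Next I would record that the hypothesis $\sigma_i^2/n_i^{(1)} = \sigma_j^2/n_j^{(1)}$, together with the standing assumption that the recruitment proportions $r_i$ are held constant across stages, propagates to every stage: $n_i^{(q)} = r_i n^{(q)}$ and $n_i'^{(q)} = r_i(n^{(q)} - n^{(q-1)})$ give $\sigma_i^2/n_i'^{(q)} = \sigma_j^2/n_j'^{(q)}$ for all $i,j$ and all $q$. This is the multi-stage analogue of the balance condition used in Theorem~\ref{the:lfc}, and it is what makes the per-stage standardisations $\sigma_{p,k}'^{(q)}$ and the increment correlation matrices $\Sigma'^{(q)}$ inherit the same symmetry that the single-stage proof exploited.

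The core of the argument is then applied factor by factor to Equation~\ref{eq:int}. Each factor $\Phi\{L_{F,\alpha}^{(q)}, U_{F,\alpha}^{(q)}, \Sigma'^{(q)}\}$ is the probability that the stage-$q$ increment statistics fall in a symmetric region, and it depends on the configuration only through the standardised effects entering the limits $L_{F,\alpha}^{(q)}$ and $U_{F,\alpha}^{(q)}$. As in Theorem~\ref{the:lfc}, each such factor is symmetric about its mean and increases as the standardised effects shrink toward zero. Perturbing a nuisance mean, say $\mu_3 = \delta/2 + \epsilon$, leaves $\mu_1 - \mu_2 = \delta$ untouched in every factor, turns the previously null comparisons $\mu_3 - \mu_j$ (for $j \geq 4$) into nonzero effects that strictly decrease each factor, and shifts the paired comparisons $\mu_1 - \mu_3 = \delta/2 - \epsilon$ and $\mu_2 - \mu_3 = -\delta/2 + \epsilon$ by equal and opposite standardised amounts. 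By the stage-wise balance noted above, these paired shifts enter each factor's quadratic form with opposite signs and cancel, leaving every factor unchanged. Because the factors are multiplied over the independent increments $\boldsymbol{t}^{(1)},\dots,\boldsymbol{t}^{(Q-1)}$, the net effect across the whole product is a strict decrease in $P\{\bar{R}_F\}$, that is a strict increase in $P(\text{Reject }H_{0,F})$; the same conclusion holds for arbitrary and possibly distinct perturbations of $\mu_3,\dots,\mu_K$, establishing that the least favourable configuration is the minimiser.

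The step I expect to be the main obstacle is verifying that the exact cancellation of the $\mu_1-\mu_3$ and $\mu_2-\mu_3$ contributions survives inside each truncated stage-$q$ integrand, rather than only for a single full-data density. This hinges on the standardised shifts remaining equal and opposite in magnitude at every stage, which is precisely why the balance condition must be lifted from the first stage to all stages through the constant-proportion assumption; once that is secured, the monotonicity and symmetry required are exactly the single-stage facts applied per factor, and the independence of the increments lets the per-factor inequalities combine into the global one.
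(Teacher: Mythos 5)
Your proposal is correct and follows essentially the same route as the paper: both reduce the multi-stage claim to the single-stage argument of Theorem~\ref{the:lfc} by observing that each stage-wise rejection event has the same structure as the single-stage all pairwise Dunnett test. You in fact supply more detail than the paper's own (very terse) proof --- in particular the explicit propagation of the balance condition $\sigma_i^2/n_i^{(1)}=\sigma_j^2/n_j^{(1)}$ to every stage via the constant recruitment proportions, and the factor-by-factor application of the symmetry/cancellation argument to the product representation in Equation~\ref{eq:int} --- both of which the paper leaves implicit.
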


\begin{corollary}
\label{cor:errspend}
Suppose we use the same error spending function $\alpha^\star(\tau)$ to construct a group sequential all pairwise Dunnett test and the same error spending function to construct an equivalent group sequential test with Bonferroni adjustment (that is for each $k=1,...,K(K-1)/2$ $H_{0k}$ the group sequential test is defined as per the all pairwise Dunnet at $\alpha/{K(K-1)/2}$, or for each $q=1,...,Q$ $\alpha^{(q)}/{K(K-1)/2}$). The probability of globally rejecting $H_{0k}$ for $k=1,...,K(K-1)/2$ is higher for the group sequential all pairwise Dunnett testing procedure.
\end{corollary}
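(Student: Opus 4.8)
The plan is to reduce the claim, exactly as in the proof of Theorem~\ref{the:bonferonni}, to a comparison of the global boundaries, but now carried out stage by stage. Write $M=K(K-1)/2$, let $(C_{F,\alpha}^{(1)},\dots,C_{F,\alpha}^{(Q)})$ denote the generalised Dunnett boundaries for the global intersection, and let $(c^{(1)},\dots,c^{(Q)})$ denote the common boundary of the group sequential Bonferroni test, so that each of its individual sequences crosses with cumulative probability $\alpha^{(q)}/M$ by stage $q$. By Corollary~\ref{cor:msconsonance} the generalised Dunnett procedure is consonant, so globally rejecting at least one $H_{0,k}$ is equivalent to rejecting $H_{0,F}$; the closure of the Bonferroni test rejects $H_{0,F}$ precisely when some sequence crosses at some stage. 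Both procedures therefore reject $H_{0,F}$ exactly when $\max_k|Z_k^{(q)}|$ exceeds a stagewise boundary at some $q$, and differ only in the boundary sequence used. Hence it suffices to establish the componentwise ordering $C_{F,\alpha}^{(q)}\le c^{(q)}$ for all $q$, with strict inequality for some $q$: the Dunnett rejection region then contains the Bonferroni one as an event, so under any configuration $\mu=\xi$ its probability of rejecting $H_{0,F}$ is at least as large, which is the assertion of the corollary.

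To compare the boundaries I would work under $H_{0,F}$ and use that both tests are calibrated with the same spending function, allocating cumulative error $\alpha^{(q)}=\alpha^\star(\tau_q)$ by stage $q$. Inserting the Bonferroni boundary $c^{(q)}$ into the Dunnett max-process and applying the Bonferroni inequality to the boundary-crossing events gives
\begin{equation*}
P_{\theta_F=\boldsymbol{0}}\Big(\bigcup_{q'\le q}\{\max_k|Z_k^{(q')}|>c^{(q')}\}\Big)\le\sum_{k=1}^{M}P_{\theta_F=\boldsymbol{0}}\Big(\bigcup_{q'\le q}\{|Z_k^{(q')}|>c^{(q')}\}\Big)=M\cdot\frac{\alpha^{(q)}}{M}=\alpha^{(q)},
\end{equation*}
with strict inequality because the correlation structure of Section~\ref{sec:msrec} forces the crossing events to overlap. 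Thus the Bonferroni boundary underspends the allocated budget in the max-process at every stage, whereas the Dunnett boundary attains $\alpha^{(q)}$ exactly via Equation~\ref{eq:er}. At the first analysis this already yields $C_{F,\alpha}^{(1)}<c^{(1)}$, since the single-stage crossing probability of the max-process is strictly decreasing in the threshold, equals $\alpha^{(1)}$ at $C_{F,\alpha}^{(1)}$, and is strictly below $\alpha^{(1)}$ at $c^{(1)}$.

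Propagating this to later stages is the main obstacle, because the error-spending boundaries are coupled: the stage-$q$ boundary is defined only relative to the event of surviving stages $1,\dots,q-1$, and the smaller Dunnett boundaries at earlier stages shrink that survival region, so cumulative underspending does not transfer mechanically to a componentwise ordering. In particular a naive monotonicity argument points the wrong way, since lowering the earlier thresholds enlarges the later cumulative crossing probability. I would therefore argue by induction on $q$ through the independent-increments representation behind Equation~\ref{eq:int}: assuming $C_{F,\alpha}^{(q')}\le c^{(q')}$ for $q'<q$, I would write the incremental spend $\alpha^{(q)}-\alpha^{(q-1)}$ as the integral over the independent stage-$q$ increment $\boldsymbol{t}^{(q)}$ of the conditional first-crossing probability on the survival region, and compare this quantity under the two candidate stage-$q$ thresholds, using that the conditional crossing probability is decreasing in the threshold to conclude $C_{F,\alpha}^{(q)}\le c^{(q)}$. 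The technical heart is controlling how the enlarged Bonferroni survival region offsets its larger stage-$q$ threshold so that the net incremental comparison still favours the Dunnett boundary; here the common shape of the spending function and the overlap-induced strictness of the union bound are what I expect to close the argument. Consonance (Corollary~\ref{cor:msconsonance}) then extends the conclusion from $H_{0,F}$ to every sub-intersection, so the higher disjunctive power holds for the full closed procedure.
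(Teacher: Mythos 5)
Your reduction---consonance plus a componentwise ordering $C_{F,\alpha}^{(q)}\le c^{(q)}$ of the global boundaries, which gives containment of the rejection regions and hence the power comparison under any configuration, extended to every sub-intersection via Corollary~\ref{cor:msconsonance}---is the right skeleton, and it is essentially the paper's route: the paper's own proof is a two-line appeal to Theorem~\ref{the:bonferonni} ``at each stage'' which silently skips exactly the coupling you flag. Your stage-one comparison is complete. The problem is that you leave the inductive step as something you ``expect'' the overlap and the common spending shape to deliver; as written that is a genuine gap, and it is the step on which the whole corollary rests. Concretely, the cumulative bound your union-bound display establishes, namely $\beta^{(q)}<\alpha^{(q)}$ where $\beta^{(q)}$ is the cumulative crossing probability of the Bonferroni max-process, is not sufficient: writing $A_{q'}=\{\max_k|Z_k^{(q')}|>C_{F,\alpha}^{(q')}\}$ and $\tilde A_{q'}=\{\max_k|Z_k^{(q')}|>c^{(q')}\}$, the inductive hypothesis gives $\bigcap_{q'<q}A_{q'}^c\subseteq\bigcap_{q'<q}\tilde A_{q'}^c$ and hence $P(\bigcup_{q'<q}A_{q'}\cup\tilde A_{q})\le\alpha^{(q-1)}+(\beta^{(q)}-\beta^{(q-1)})$, so what you need is the \emph{increment} bound $\beta^{(q)}-\beta^{(q-1)}\le\alpha^{(q)}-\alpha^{(q-1)}$, i.e.\ that the underspend $\alpha^{(q)}-\beta^{(q)}$ is non-decreasing in $q$; nothing in your argument forces this.

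The gap does close, and the closing move is a containment of survival events rather than any property of $\alpha^\star$. Apply the union bound over $k$ to $\beta^{(q)}-\beta^{(q-1)}=P_{\theta_F=\boldsymbol{0}}\bigl(\bigcap_{q'<q}\tilde A_{q'}^c\cap\bigcup_{k}\{|Z_k^{(q)}|>c^{(q)}\}\bigr)$, and for each fixed $k$ use $\bigcap_{q'<q}\tilde A_{q'}^c\subseteq\bigcap_{q'<q}\{|Z_k^{(q')}|\le c^{(q')}\}$: each term is then at most the $k$th individual sequence's own incremental spend $(\alpha^{(q)}-\alpha^{(q-1)})/M$, by the Bonferroni calibration. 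Summing over $k$ gives the increment bound, the induction goes through, and the strict inequality you already have at stage one yields the strict comparison. With $C_{\mathcal{K},\alpha}^{(q)}<C_{F,\alpha}^{(q)}\le c^{(q)}$ for every $\mathcal{K}$ by Corollary~\ref{cor:msconsonance}, every local test's rejection region dominates its Bonferroni counterpart, which completes the claim in the same way as the second half of the proof of Theorem~\ref{the:bonferonni}. I would encourage you to write this increment argument out explicitly, since it is precisely the content the paper's one-line proof omits.
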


\section{Fully flexible multi-stage designs}

An alternative to group sequential designs is to define fully flexible multi-stage designs \citep{bretz2006confirmatory,schmidli2006confirmatory}. As described in Section~\ref{sec:mstest} both the cumulative and stage-wise z-statistics have known multivariate normal distributions. 

Suppose at each analysis $q=1,...,Q-1$ of the trial we wish to make a decision about how the next stage (or all remaining stages) of the trial will run, based on the accumulated observations up to stage $q$. For example, we may base the decision on the $Z_k^{(q)}$ for $k=1,...,K(K-1)/2$. Noting that the stage-wise z-statistics $Z_k'^{(q)}$ are conditionally independent for $q=1,...,Q$ we can use combination tests \citep{bauer1994evaluation} to combine the information across all stages of the trial while allowing any decision making framework to be applied. In most circumstances we expect the choice of the weighted inverse normal \citep{lehmacher1999adaptive} to be a strong choice but we give the following general construction. 

To apply the combination test we convert all the relevant test statistics to p-values ($Unif(0,1)$ random variables under the null hypothesis). For each $k=1,...,K(K-1)/2$ and $q=1,...,Q$ we have that
\begin{equation*}
P_k'^{(q)} = 1 - \Phi(Z_k'^{(q)}).
\end{equation*}
While for all other $\mathcal{K} \subset (1,...,K(K-1)/2)$, with $\mathcal{K} = (k_1,K_2,...)$ we define stage-wise test statistics for each stage $q=1,...,Q$
\begin{equation*}
Z_{\mathcal{K}}'^{(q)} = max(|Z_{k_1}'^{(q)}|,|Z_{k_2}'^{(q)}|,...),
\end{equation*}
from which we find the corresponding p-value by evaluating the following under $H_{0,\mathcal{K}}$
\begin{equation*}
P_{\mathcal{K}}'^{(q)} = 1 - P_{\theta_{\mathcal{K}}=\boldsymbol{0}}(\bigcap_{k\in\mathcal{K}} z_{\mathcal{K}}'^{(q)} < Z_k'^{(q)} < z_{\mathcal{K}}'^{(q)}).
\end{equation*}
Recall that under $H_{0,\mathcal{K}}$ $(Z_{k_1}'^{(q)},Z_{k_2}'^{(q)},...)$ follow a multivariate normal distribution with mean $0$ and known correlation matrix as described in Section~\ref{sec:msrec}. To test each null hypothesis at the end of the trial we use a combination function $\psi(.)$ to find combined p-values 
\begin{equation*}
P_{\mathcal{K}}^{(Q)} = \psi(P_{\mathcal{K}}'^{(1)},...,P_{\mathcal{K}}'^{(q)})
\end{equation*}
for all $\mathcal{K} \subset (1,...,K(K-1)/2)$. Noting that $P_{\mathcal{K}}'^{(1)},...,P_{\mathcal{K}}'^{(q)}$ are p-clud \citep{brannath2002recursive} $\psi(.)$ is chosen such that under $H_{0,\mathcal{K}}$ we have that $P_{\mathcal{K}}'^{(q)}  \sim Unif(0,1)$. Thus we may reject $H_{0,\mathcal{K}}$ at level $\alpha$ when $P_{\mathcal{K}}^{(Q)}<\alpha$.

\begin{remark}
This procedure will strongly control the FWER as all tests may be constructed at level $\alpha$. One could of course find the combined z-statistics for the trial and from these construct the tests for each intersection hypothesis based on this. The performance of the fully flexible all pairwise Dunnett under either choice and comparison with the alternatives we have discussed in the other Sections and will depend on the choice of the combination function. It should however be clear that one would expect a benefit over a Bonferroni adjustment and gatekeeping procedure as argued previously.
\end{remark}

\section{Discussion}

In this manuscript, we have introduced a Dunnett type method for making all pairwise comparisons in experiments with no control. We have shown that this method can be constructed to ensure that the FWER is strongly controlled at the pre-determined level. This allows for superior performance when compared to a Bonferroni type adjustment or gatekeeping procedure. We have shown how to apply this testing methodology in a variety of scenarios. Allowing its application in all forms of adaptive trial design for clinical trials, which would be a common setting for such methods. 

Through the least favourable configuration, we enable the design of such methods in a worst-case scenario. In practice, this may not be the configuration under which the trial is designed. With the ratios for recruitment chosen the rejection regions will not depend on the actual sample size allowing the finding of a suitable sample size under any given configuration of the treatment effect. As alternatives, one might consider powering the trial such that only one of the treatments has a different mean than the others. Alternatively, a setting, such as discussed in \cite{whitehead2020estimation}, where $T_j$ would be eliminated from the study with probability $\geq0.90$ if it is inferior to $T_i$ by a given margin could be considered. Whatever the choice the least favourable configuration then serves as a useful tool to understand the operating characteristics of the design and will be useful for comparing potential performance at the design stage.

Our work here has focussed on normally distributed observations with known variance, though this is not required. The original work of Dunnett is framed using the multi-variate t-distribution allowing for the incorporation of unknown variances. Likewise, the incorporation of other endpoints that are not normally distributed, such as ordinal outcomes or survival times is possible. Such additions are discussed by \cite{jaki2013considerations} in the context of the generalised Dunnett methodology. 

The interpretation of the results from such trials is less clear than in a typical multi-arm trial with a common control. While it will also be important to consider how early stopping of recruitment to a treatment (both for futility and efficacy) might be incorporated both from a practical and statistical viewpoint. In addition to the increased technical complexity of the design of such trials will also be computationally intensive. We shall address such discussion in future work to aid the design and analysis of such methods.

\section*{Acknowledgements}

This research was supported by the NIHR Cambridge Biomedical Research Centre (BRC-1215-20014). T Jaki received funding from the UK Medical Research Council (MC\_UU\_00002/14, MC\_UU\_00040/03). The views expressed in this publication are those of the authors. They are not necessarily those of the NHS, the National Institute for Health Research or the Department of Health and Social Care (DHSC). The funders and associated partners are not responsible for any use that may be made of the information contained herein. For the purpose of open access, the author has applied a Creative Commons Attribution (CC BY) licence to any Author Accepted Manuscript version arising.

\section*{Supplementary Material}

\begin{proof}[Proof of Lemma~\ref{lem:consonance}]
Let $\theta_{\mathcal{K}_1}$ and $\theta_{\mathcal{K}_2}$ be the vectors of $\theta_k$ for $k \in \mathcal{K}_1$ and $k \in \mathcal{K}_2$ respectively. Defining $Z_{\mathcal{K}_1}$ to be the maximum of the $|Z_k|$ for which $k \in \mathcal{K}_1$ and $Z_{\mathcal{K}_2}$ to be the maximum of the  $|Z_k|$ for which $k \in \mathcal{K}_2$. Under $H_{0\mathcal{K}_1}$ we have  $\theta_{\mathcal{K}_1} = \boldsymbol{0}$ and by construction
\begin{equation*}
P_{\theta_{\mathcal{K}_1} = \boldsymbol{0}}(Z_{\mathcal{K}_1} > C_{\mathcal{K}_1,\alpha}) = \alpha.
\end{equation*}
Similarly under $H_{0\mathcal{K}_2}$ we have  $\theta_{\mathcal{K}_2} = \boldsymbol{0}$ and by construction
\begin{equation*}
P_{\theta_{\mathcal{K}_2} = \boldsymbol{0}}(Z_{\mathcal{K}_2} > C_{\mathcal{K}_2,\alpha}) = \alpha.
\end{equation*}
Now consider the probability that $Z_{\mathcal{K}_2} > C_{\mathcal{K}_1,\alpha}$ under $H_{0\mathcal{K}_2}$,
\begin{equation*}
\begin{split}
P_{\theta_{\mathcal{K}_2} = \boldsymbol{0}}&(Z_{\mathcal{K}_2} > C_{\mathcal{K}_1,\alpha}) \\
&= P_{\theta_{\mathcal{K}_2} = \boldsymbol{0}}[\{{(\mathcal{K}_2} > C_{\mathcal{K}_1,\alpha})\cap (Z_{\mathcal{K}_1} = Z_{\mathcal{K}_2})\}\cup\{(Z_{\mathcal{K}_2} > C_{\mathcal{K}_1,\alpha})\cap (Z_{\mathcal{K}_1} < Z_{\mathcal{K}_2})\}]\\
&= P_{\theta_{\mathcal{K}_2} = \boldsymbol{0}}\{(Z_{\mathcal{K}_2} > C_{\mathcal{K}_1,\alpha})\cap (Z_{\mathcal{K}_1} = Z_{\mathcal{K}_2})\} + \{(Z_{\mathcal{K}_2} > C_{\mathcal{K}_1,\alpha})\cap (Z_{\mathcal{K}_1} < Z_{\mathcal{K}_2})\}\\
&= P_{\theta_{\mathcal{K}_2} = \boldsymbol{0}}(Z_{\mathcal{K}_1} > C_{\mathcal{K}_1,\alpha}) + P_{\theta_{\mathcal{K}_2} = \boldsymbol{0}}(Z_{\mathcal{K}_1} < C_{\mathcal{K}_1,\alpha})P_{\theta_{\mathcal{K}_2} = \boldsymbol{0}}(Z_{\mathcal{K}_2} > C_{\mathcal{K}_1,\alpha}|Z_{\mathcal{K}_2} > C_{\mathcal{K}_1,\alpha}).
\end{split}
\end{equation*}
Since  $\mathcal{K}_1 \subset \mathcal{K}_2$ we have that $\theta_{\mathcal{K}_2} = \boldsymbol{0}$ implies $\theta_{\mathcal{K}_1} = \boldsymbol{0}$ thus under $H_{0\mathcal{K}_2}$ 
\begin{equation*}
P_{\theta_{\mathcal{K}_2} = \boldsymbol{0}}(Z_{\mathcal{K}_1} > C_{\mathcal{K}_1,\alpha}) = \alpha
\end{equation*} 
and 
\begin{equation*}
P_{\theta_{\mathcal{K}_2} = \boldsymbol{0}}(Z_{\mathcal{K}_1} < C_{\mathcal{K}_1,\alpha}) = 1-\alpha.
\end{equation*} 
Note also that under $H_{0\mathcal{K}_2}$
\begin{equation*}
P_{\theta_{\mathcal{K}_2} = \boldsymbol{0}}(Z_{\mathcal{K}_2} > C_{\mathcal{K}_1,\alpha}|Z_{\mathcal{K}_1} < C_{\mathcal{K}_1,\alpha}) > 0
\end{equation*}
and so 
\begin{equation*}
P_{\theta_{\mathcal{K}_2} = \boldsymbol{0}}(Z_{\mathcal{K}_2} > C_{\mathcal{K}_1,\alpha}) > P_{\theta_{\mathcal{K}_2} = \boldsymbol{0}}(Z_{\mathcal{K}_2} > C_{\mathcal{K}_2,\alpha}).
\end{equation*}
Thus we have $C_{\mathcal{K}_1,\alpha} < C_{\mathcal{K}_2,\alpha}$ as required.
\end{proof}

\begin{proof}[Proof of Corollary~\ref{cor:msconsonance}]
For any given $\mathcal{K}_1,\mathcal{K}_2 \subset (1,...,K(K-1)/2)$ at stage $q=1,...,Q$ the error spending function dictates that the test be constructed at level $\alpha^{(q)}$. For $q=1$ it is clear that this is consonant by Lemma~\ref{lem:consonance}. 
While for $h>1$ the error from the previous $Q-1$ stages is fixed at $\alpha^{(Q-1)}$ with $C_{\mathcal{K}_1,\alpha}^{(q)}$ and $C_{\mathcal{K}_2,\alpha}^{(q)}$ only contributing to Equation~\ref{eq:er} 
through $\Phi\{L_{\mathcal{K},\alpha}^{(q)}(\boldsymbol{0}),U_{\mathcal{K},\alpha}^{(q)}(\boldsymbol{0}),\Sigma'^{(q)}\}$. Therefore $\Phi\{L_{\mathcal{K},\alpha}^{(q)}(\boldsymbol{0}),U_{\mathcal{K},\alpha}^{(q)}(\boldsymbol{0}),\Sigma'^{(q)}\}$ must take the same value for both $C_{\mathcal{K}_1,\alpha}^{(q)}$ and $C_{\mathcal{K}_2,\alpha}^{(q)}$, clearly following the arguments from the proof of Lemma~\ref{lem:consonance} 
 we have that $C_{\mathcal{K}_1,\alpha}^{(q)} < C_{\mathcal{K}_2,\alpha}^{(q)}$. Thus we have the result for all $q=1,...,Q$.
\end{proof}

\begin{proof}[Proof of Corollary~\ref{cor:lfc}]
Consider the probability of rejecting $H_{01}$, this requires each associated $H_{0,\mathcal{K}}$ for which $k\in\mathcal{K}$ to have been rejected at some stage. That is for some $q=1,...,Q$, $Z_{\mathcal{K}}^{(q)} > C_{\mathcal{K},\alpha}^{(q)}$. The probability of each of these possible events is directly comparable to the single-stage all pairwise Dunnett test, and thus the result follows directly from the proof of Theorem~\ref{the:lfc}.
\end{proof}

\begin{proof}[Proof of Corollary~\ref{cor:errspend}]
Consider at any given stage $q=1,...,Q$ the group sequential all pairwise Dunnett is constructed such that each hypothesis is tested at $\alpha^{(q)}$. This result follows directly from Theorem~\ref{the:bonferonni}.
\end{proof}

\bibliographystyle{apalike}
\bibliography{APD}

\end{document}